\newtheorem{Example}{Example}
\def\BibTeX{{\rm B\kern-.05em{\sc i\kern-.025em b}\kern-.08em
    T\kern-.1667em\lower.7ex\hbox{E}\kern-.125emX}}
\DeclareMathOperator{\lcm}{lcm}
\begin{document}

\title{A Grouping-based Scheduler for Efficient Channel Utilization under Age of Information Constraints}
\author{\IEEEauthorblockN{Lehan~Wang, Jingzhou~Sun, Yuxuan~Sun, Sheng~Zhou, Zhisheng~Niu}
	
	\IEEEauthorblockA{Beijing National Research Center for Information Science and Technology\\
		Department of Electronic Engineering, Tsinghua University, Beijing 100084, P.R. China\\
		\{wang-lh19, sunjz18\}@mails.tsinghua.edu.cn, \{sunyuxuan, sheng.zhou, niuzhs\}@tsinghua.edu.cn}}

\maketitle

\begin{abstract}
We consider a status information updating system where a fusion center collects the status information from a large number of sources and each of them has its own age of information (AoI) constraints. A novel grouping-based scheduler is proposed to solve this complex large-scale problem by dividing the sources into different scheduling groups. The problem is then transformed into deriving the optimal grouping scheme. A two-step grouping algorithm (TGA) is proposed: 1) Given AoI constraints, we first identify the sources with harmonic AoI constraints, then design a fast grouping method and an optimal scheduler for these sources. Under harmonic AoI constraints, each constraint is divisible by the smallest one and the sum of reciprocals of the constraints with the same value is divisible by the reciprocal of the smallest one. 2) For the other sources without such a special property, we pack the sources which can be scheduled together with minimum update rates into the same group. Simulations show the channel usage of the proposed TGA is significantly reduced as compared to a recent work and is $0.42\%$ larger than a derived lower bound when the number of sources is large.
\end{abstract}

\section{Introduction}
Future 5G and 6G networks are expected to energize industrial Internet of Things (IoT), in which smart machines such as moving vehicles operate autonomously by collecting status information from remote sensors. To guarantee prompt and reliable decision-making,  the timeliness of the status information is crucial. For this purpose, a new metric called age of information (AoI) has been proposed in \cite{kaul2012real}. It is the time elapsed since the generation of the latest status information received by the destination. 

AoI has been widely studied by researchers \cite{abdel2019optimized,li2022scheduling,liu2021aion,tang2020minimizing,sun2022status,sun2021age,devassy2018delay,kadota2018scheduling,li2021scheduling,maatouk2022timely,peng2020age,song2020optimal,sun2018information,wang2021uoi}. Most existing works focus on the average AoI over time and different information sources. Since extreme values of AoI are also important, a policy with optimal average age may not be enough for some time-critical applications, such as vehicle-to-vehicle communications \cite{abdel2019optimized}. Therefore, in this paper, the schedulers with AoI constraints are studied. We consider a scenario where a fusion center collects status information from multiple sources. The age of the status information from each source should not exceed their AoI constraints. Assuming at most one source can transmit information in a time slot on a channel, we are interested in the minimum number of channels that can guarantee the AoI constraints of all sources and how to construct a scheduler for them. 

The most relevant works are \cite{li2022scheduling,liu2021aion}. In \cite{li2022scheduling}, the authors assume there is only one channel and can always find a scheduler when $\sum_{n=1}^{N}\frac{1}{d_n}$, the load of AoI constraints of $N$ sources $[d_1, d_2, \ldots, d_N]$, is less than or equal to $\ln2$. In \cite{liu2021aion}, the authors propose an algorithm named Aion, which can construct cyclic schedulers for any AoI constraints. The ratio of the number of channels achieved by Aion to the optimal solution is upper bounded by a function of the update frequency of the sources and the load of AoI constraints. Aion is optimal only when the AoI constraints are consecutively divisible, where $d_{n+1}$ is divisible by $d_n$ for $n\in\{1,\ldots,N-1\}$. For the generalized AoI constraints, Aion narrows the search space to the cyclic schedulers with consecutively divisible average transmission intervals and may not be optimal. In this paper, our grouping method 
efficiently decomposes this complex channel minimization problem into several sub-problems, and does not require the schedulers to provide special average transmission intervals for all the sources. 

The contributions of this paper are summarized as follows:

\begin{itemize}
\item We propose a grouping method to split the problem into multiple small-scale sub-problems by dividing the sources into different groups, and constructing schedulers for each group separately. Based on the grouping method, we then derive an upper bound of the number of channels and transform the channel minimization problem into finding the optimal grouping method.
\item A novel two-step grouping algorithm (TGA) is proposed to solve the grouping problem. Given AoI constraints, we first identify the sources with harmonic AoI constraints, because we can design a fast grouping method and an optimal scheduler for such sources. Under harmonic AoI constraints, all the constraints are divisible by the smallest one, and the sum of reciprocals of the constraints with the same value is divisible by the reciprocal of the smallest AoI constraints. Secondly, for other sources, a heuristic grouping algorithm is proposed. The sources are assigned to the group with the minimum distance. The distance between source $n$ and group $g$ is the difference between $\frac{1}{d_n}$ and the minimum achievable update frequency when $n$ is packed into $g$, where $d_n$ is AoI constraint of $n$. Simulations show the number of channels achieved by the proposed TGA is reduced by 20\% than that of Aion when there are 300 sources. 
\end{itemize}

The rest of the paper is organized as follows. In Section II,
the system model is described.  
The problem formulation and motivations of our grouping method are presented in Section III. The schedulers for harmonic AoI constraints are discussed in Section IV while TGA is proposed in Section V. Finally, the simulation results are shown in Section VI, and the conclusions
are drawn in Section VII.

\section{System Model}

We consider a time-slotted system with one fusion center and $N$ information sources sharing $K$ reliable wireless channels, as shown in Fig.~\ref{system}. At the beginning of each time slot, the scheduled sources transmit the current status to the fusion center. The status information will be received by the center at the end of the slot.  Denote the scheduling decision of source $n$ by $U_n(t)\in\{0,1\}$. If $U_n(t)=1$, then source $n$ is scheduled at time slot $t$ to transmit its status information to the center.

\vspace{-0.2in}

\begin{figure}[ht]
	
	\centering
	\includegraphics[scale=0.2]{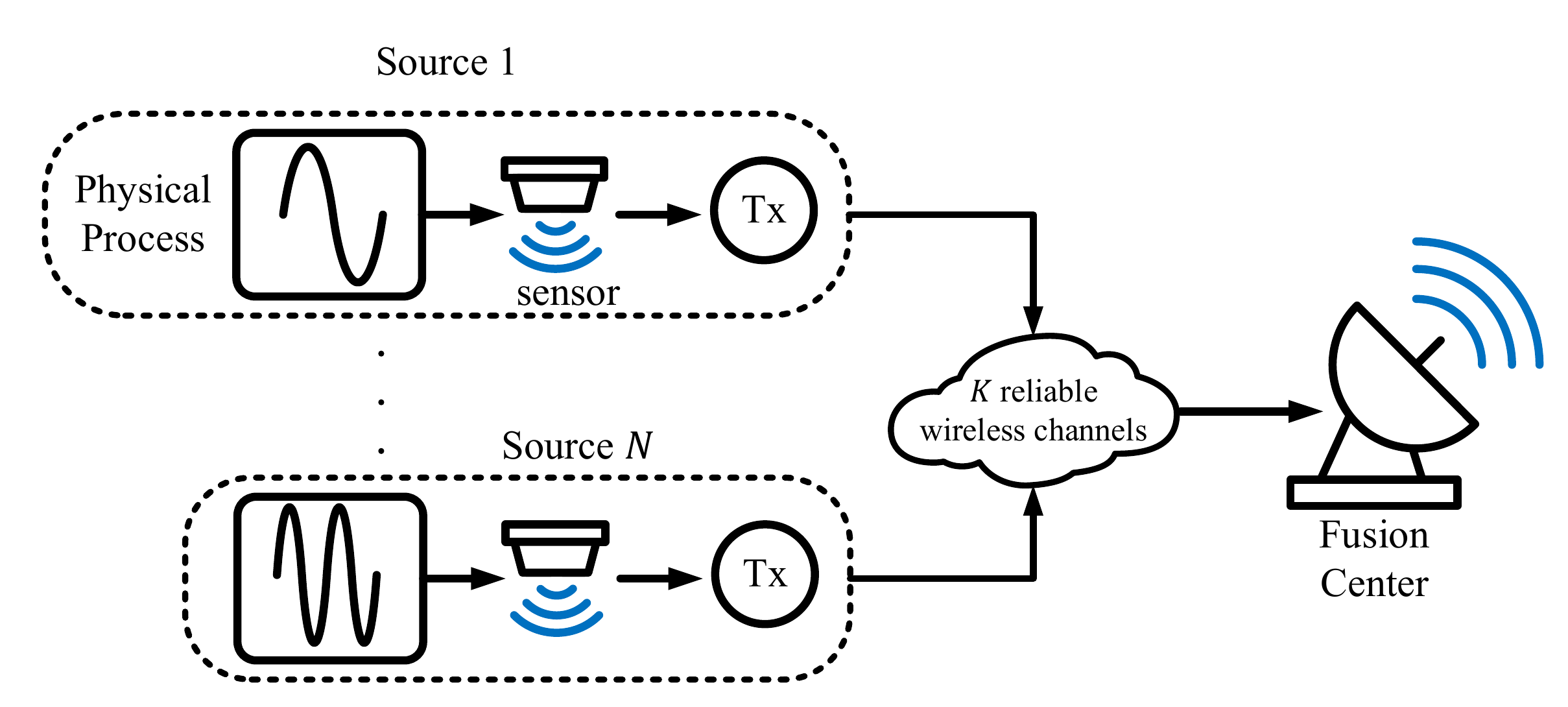}
	\vspace{-0.1in}
	\caption{System model.}
	\label{system}
\end{figure}
	\vspace{-0.1in}

AoI is used to measure the freshness of status information. Denote the generation time of the latest received update from source $n$ and the AoI of source $n$ at time slot $t$ by $G_n(t)$ and $A_n(t)$, respectively. Then $A_n(t)$ is the time elapsed since $G_n(t)$, namely $A_n(t)=t-G_n(t)$. If source $n$ is scheduled at time $t$, then $A_n(t+1)$ will decrease to 1. If $U_n(t)=0$, $A_n(t+1)$ will increase by 1, i.e., 

\begin{equation}
	\begin{split}
		A_n(t+1)=&  \left\{\begin{array}{lc}
			1, &U_n(t)=1;\\
			A_n(t)+1, &U_n(t)=0.\\	
		\end{array}\right.
	\end{split}
\end{equation}

  The AoI constraint of source $n$ is denoted by $d_n\in\mathbb{Z}^{+}$, where $\mathbb{Z}^{+}$ denotes positive integers. Without loss of generality, we assume $d_1\leq d_2 \leq d_3 \cdots\leq d_N$. In addition, each channel can schedule at most one source in a time slot, therefore $K\geq\max_{t}\sum_{n=1}^{N}U_n(t)$. Our goal is to find a scheduler $\bm{\pi}$ which can satisfy the AoI constraints and minimize the number of channels, namely $K$. The $t$-th element of scheduler $\bm{\pi}$ contains scheduling decisions of $N$ sources in time slot $t$, namely $\bm{\pi}(t)=[U_1(t),\cdots,U_n(t), \cdots, U_N(t)]$. 

\section{Problem Formulation and Analysis}

The problem can be formulated as follow: 
 \begin{equation}\label{problem}
	\begin{split}
		\min_{\bm{\pi}} \,\,& K,\\
		s.t.\,\,& A_n(t)\leq d_n,\forall n=1,\cdots,N, \forall t=1,2,\cdots.\\
	\end{split}
\end{equation}

Our objective is to find a scheduler $\bm{\pi}$ which can satisfy the AoI constraints and minimize the number of channels required. The following lemma, also shown in \cite{liu2021aion}, presents the lower bound of this scheduling problem. 

\newtheorem{lemma}{Lemma}
\begin{lemma} \label{lbub}
	Given AoI constraints $\bm{d} =
	[d_1, d_2, \cdots ,d_N]$, we have 
	\begin{equation}
		K^{\star}\geq\left\lceil\sum_{n=1}^{N}\frac{1}{d_n}\right\rceil,
	\end{equation}
	where $K^{\star}$ is the optimal number of channels. 
\end{lemma}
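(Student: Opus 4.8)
The plan is to prove the bound by a long-run counting argument: over a finite horizon of $T$ slots, compare the total number of transmissions that \emph{must} occur in order to keep every source below its age constraint with the total number of transmissions the $K$ channels \emph{can} provide, and then let $T\to\infty$.

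First I would pin down how often a feasible scheduler is forced to serve a given source. In any scheduler satisfying $A_n(t)\le d_n$ for all $t$, two consecutive transmission slots of source $n$ are at most $d_n$ apart: if source $n$ is served at slot $t_0$ and not again before slot $t_0+d_n+1$, then $A_n(t_0+d_n+1)=d_n+1>d_n$, a contradiction. For the same reason the first transmission of source $n$ occurs no later than slot $d_n$ (since $A_n(1)\le d_n$ and the age grows by one in every idle slot), and within the window $[1,T]$ the last transmission of source $n$ occurs after slot $T-d_n$ (otherwise one further transmission would be forced strictly inside the horizon). Writing $M_n(T):=\#\{t\le T: U_n(t)=1\}$ and chaining the gap bounds, I obtain $T\le (M_n(T)+1)\,d_n$, i.e.\ $M_n(T)\ge T/d_n-1$.

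Next I would invoke the channel constraint. Since each channel serves at most one source per slot, $\sum_{n=1}^{N}U_n(t)\le K$ for every $t$, which is precisely the condition $K\ge\max_t\sum_{n=1}^{N}U_n(t)$ from Section~II. Summing over $t=1,\dots,T$ gives $\sum_{n=1}^{N}M_n(T)=\sum_{t=1}^{T}\sum_{n=1}^{N}U_n(t)\le KT$. Combining with the per-source bound yields $KT\ge \sum_{n=1}^{N}M_n(T)\ge T\sum_{n=1}^{N}\frac{1}{d_n}-N$, hence $K\ge \sum_{n=1}^{N}\frac{1}{d_n}-\frac{N}{T}$. Letting $T\to\infty$ gives $K^{\star}\ge\sum_{n=1}^{N}\frac{1}{d_n}$, and since $K^{\star}\in\mathbb{Z}^{+}$ the ceiling can be inserted, which is the claim.

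The only place that needs care is the boundary handling in $M_n(T)\ge T/d_n-1$: one must check that the bound on the first transmission time and the bound on the tail after the last transmission in $[1,T]$ are tight enough that the additive loss is exactly this $-1$, which then vanishes in the limit; the rest is routine arithmetic. I would also note an alternative, cleaner derivation suited to the later sections: if one restricts to cyclic schedulers of period $L$, then the cyclic inter-transmission gaps of source $n$ are each at most $d_n$ and sum to $L$, so source $n$ is served at least $L/d_n$ times per period; summing over $n$ and using $\sum_{n=1}^{N}U_n(t)\le K$ over one period gives $KL\ge L\sum_{n=1}^{N}\frac{1}{d_n}$ and hence $K\ge\sum_{n=1}^{N}\frac{1}{d_n}$ with no limiting argument at all.
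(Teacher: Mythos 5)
Your proof is correct and is essentially the same argument the paper uses: the paper's one-line proof ("each source requires at least $\frac{1}{d_n}$ of the time slots of a channel") is precisely the long-run rate-counting idea you make rigorous via the finite-horizon bound $M_n(T)\ge T/d_n-1$ and the limit $T\to\infty$. Your write-up simply supplies the details (gap bound between consecutive transmissions, integrality of $K^{\star}$ to insert the ceiling) that the paper leaves implicit, so there is nothing to flag.
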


\begin{proof}
	Each source requires at least $\frac{1}{d_n}$ of the time slots of a channel, therefore $K^{\star}\geq\lceil\sum_{n=1}^{N}\frac{1}{d_n}\rceil$.
\end{proof} 

Since scheduler $\bm{\pi}$ possesses infinite elements, solving \eqref{problem} is challenging. If we can narrow the candidate scheduler spaces to cyclic schedulers, which are formally defined in Definition \ref{cyclic scheduler}, the problem can be significantly simplified.  

\newtheorem{definition}{Definition}
\begin{definition} \label{cyclic scheduler}
	A scheduler is cyclic if and only if there exists $C\in\mathbb{Z}^{+}$ such that $U_n(t)=U_n(t+C), \forall n\in\{1,2,\cdots,N\},\forall t$. The cycle length of the scheduler is the smallest integer $C$ which satisfies $U_n(t)=U_n(t+C)$ for $N$ sources.
\end{definition}
The following lemma provides us with the basis to consider only cyclic schedulers and not non-cyclic schedulers.

\begin{lemma} \label{cyclic}
	For any AoI constraints $\bm{d} =
	[d_1, d_2, \cdots ,d_N]$, there exists an optimal cyclic scheduler which only requires $K^{\star}$ channels.
\end{lemma}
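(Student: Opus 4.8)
\emph{Proof idea.} The plan is to take any optimal scheduler and ``fold it onto itself'': since the AoI constraints force the joint AoI vector into a finite set of states, some state must recur, and the segment of decisions between two recurrences can be repeated forever to obtain a cyclic scheduler of the same cost. First I would fix an optimal scheduler $\bm\pi^\star$ attaining $K^\star$ channels (one exists, since e.g.\ scheduling every source in every slot is feasible, so $K^\star\le N$). By feasibility, $1\le A_n(t)\le d_n$ for all $n$ and all $t\ge 1$, hence the state $\bm A(t)=[A_1(t),\dots,A_N(t)]$ always lies in the finite set $\mathcal S:=\prod_{n=1}^N\{1,\dots,d_n\}$, which has $|\mathcal S|=\prod_{n=1}^N d_n$ elements. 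Applying the pigeonhole principle to the slots $1,2,\dots,|\mathcal S|+1$ yields two slots $t_1<t_2$ with $\bm A(t_1)=\bm A(t_2)$; set $C:=t_2-t_1$.

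Next I would define the candidate cyclic scheduler $\bm\pi'$ by repeating the length-$C$ block of decisions played by $\bm\pi^\star$ in slots $t_1,\dots,t_2-1$, i.e.\ $\bm\pi'(t):=\bm\pi^\star\big(t_1+((t-1)\bmod C)\big)$ for $t\ge 1$, so that $U_n'(t)=U_n'(t+C)$ for every $n$ and $t$ and $\bm\pi'$ is cyclic in the sense of Definition~\ref{cyclic scheduler}. There are then two things to verify. For the channel count, in every slot $\bm\pi'$ activates $\sum_n U_n'(t)=\sum_n U_n^\star\big(t_1+((t-1)\bmod C)\big)\le K^\star$ sources, so $\bm\pi'$ needs at most $K^\star$ channels; combined with Lemma~\ref{lbub} and the optimality of $K^\star$ this forces it to need exactly $K^\star$. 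For feasibility, I would use that the AoI recursion is deterministic given the current state and the scheduling decision: initializing $\bm A'(1):=\bm A(t_1)$, an induction on the number of completed periods shows $\bm A'(kC+1+j)=\bm A(t_1+j)$ for all $k\ge 0$ and $0\le j\le C$ (the induction closes because applying the block to $\bm A(t_1)$ ends at $\bm A(t_2)=\bm A(t_1)$). Thus every AoI value occurring under $\bm\pi'$ already occurs in the feasible segment $\bm A(t_1),\dots,\bm A(t_2)$ of $\bm\pi^\star$, so $A_n'(t)\le d_n$ for all $n,t$ and $\bm\pi'$ satisfies \eqref{problem}.

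The step I expect to need the most care is the feasibility check: one has to argue at the period boundary and about the initial condition, i.e.\ to justify that starting the cycle from the state $\bm A(t_1)$ is legitimate and that no warm-up transient sneaks in a constraint violation — this is exactly what the short induction above handles. The remaining ingredients (finiteness of $\mathcal S$, pigeonhole, and the per-slot channel bound) are routine. Putting the two checks together, $\bm\pi'$ is an optimal cyclic scheduler using $K^\star$ channels, which proves the lemma.
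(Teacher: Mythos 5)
Your proposal is correct and follows essentially the same route as the paper: finiteness of the AoI state space, pigeonhole to find a recurring state, and periodic repetition of the intervening block of decisions. Your version is in fact slightly more careful than the paper's, both in repeating the block from $t_1$ to $t_2-1$ (rather than the first $C$ slots, as the paper's indexing literally suggests) and in spelling out the boundary induction for feasibility.
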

\begin{proof}
To prove this lemma, we define the state of $N$ sources at time $t$ as $\bm{S}(t)=[A_1(t),\cdots,A_N(t)]$. With scheduling decision $\bm{\pi}(t)$, the state will transit into $\bm{S}(t+1)=(\bm{J}_{1\times N}-\bm{\pi}(t))\odot\bm{S}(t)+\bm{J}_{1\times N}$, where $\bm{J}_{1\times N}$ is a $N$-dimensional vector whose elements are all $1$, $\odot$ represents element-wise product between two matrices. For any scheduler which can satisfy the AoI constraints, $A_n(t)\leq d_n$ holds for all the sources. Therefore, there are $\prod_{n=1}^{N}d_n$ different states. We assume there exists an optimal scheduler $\bm{\pi}^{\star}$ which requires $K^{\star}$ channels. With $\bm{\pi}^{\star}$, there must be a state which occurs more than once in the time interval from the first slot to the $\left(\prod_{n=1}^{N}d_n+1\right)$-th slot. Let $t_1$ and $t_2$ denote the time slot when the state appears for the first time, and the second time, respectively. Then we can construct an optimal cyclic scheduler $\bm{\pi}^{\prime}$ with cycle length $C=t_2-t_1$. For time slot $t^{\prime}=t+mC$, $t\in[1,C],m\in\mathbb{N}$, let $\bm{\pi}^{\prime}(t^{\prime})=\bm{\pi}^{\star}(t)$, where $\mathbb{N}$ denotes natural numbers. Then $\bm{\pi}^{\prime}$ also satisfies the AoI constraints and requires $K^\star$ channels.   
\end{proof}

Therefore, to solve problem \eqref{problem}, we first determine all the possible transitions among the states and then identify the loop (a transition path starts and ends at the same state) requiring the minimum number of channels. Due to the large state space, searching for the optimal scheduler is computationally prohibitive. A simple idea is to divide $N$ sources into different groups and design schedulers for each group independently. Such an operation can split the large-scale problem into multiple sub-problems and efficiently narrows down the state space.

Based on this idea, we focus on the design of the grouping scheme and the corresponding cyclic scheduler in this paper. 
To discuss what kind of sources can be divided into a same group, we start from the simplest scenario where $d_1=\cdots=d_N$. In this case, a scheduler can be constructed by simply assigning every $d_1$ time slots to each source. Under this cyclic scheduler, the average transmission interval of source $n$ is $l_n=\frac{C}{\sum_{t=1}^{C}U_n(t)}$ and  $l_n=d_1$ holds for each source. Since each channel can schedule at most $d_1$ sources, such a scheduler requires $\lceil N/d_1 \rceil$ channels, which meets the lower bound of the problem. Based on this inspiring observation, we can construct a scheduler named grouping for distinct values (GD) given any AoI constraints $\bm{d}$.  We define the number of distinct values in $\bm{d}$ as $v$. $\bm{u}=\{u_1,\cdots,u_v\}$ includes all the distinct values. Let $o_j$ denote the number of occurrences of $u_j$ in $\bm{d}$. In GD, the sources with the same AoI constraints are packed into a same group. 
Then we design schedulers for each group independently. Since each channel can schedule at most $u_j$ sources with AoI constraints $u_j$, then $o_j$ sources can be scheduled by $\lceil\frac{o_j}{u_j}\rceil$ channels. By assigning different channels to each group, GD requires $\sum_{j=1}^{v}\lceil\frac{o_j}{u_j}\rceil$ channels. Note that when $v=1$ or $\frac{o_j}{u_j}\in\mathbb{Z}^{+}$, $\forall j$, GD is optimal since $\sum_{j=1}^{v}\lceil\frac{o_j}{u_j}\rceil=\lceil\sum_{n=1}^{N}\frac{1}{d_n}\rceil$ in these cases. Based on GD, we can present an upper bound of the problem.

\begin{lemma} \label{ub}
	Given AoI constraints $\bm{d} =
	[d_1, d_2, \cdots ,d_N]$, we have 
	\begin{equation}
		K^{\star}\leq\sum_{j=1}^{v}\left\lceil\frac{o_j}{u_j}\right\rceil.
	\end{equation}
\end{lemma}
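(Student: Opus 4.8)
The plan is to prove Lemma~\ref{ub} constructively, by exhibiting an explicit scheduler --- namely the GD scheduler described in the paragraph preceding the lemma --- that satisfies all AoI constraints and uses exactly $\sum_{j=1}^{v}\lceil o_j/u_j\rceil$ channels. Since $K^\star$ is by definition the minimum number of channels over all feasible schedulers, the existence of any such feasible scheduler immediately yields $K^\star \le \sum_{j=1}^{v}\lceil o_j/u_j\rceil$.

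First I would fix a distinct value $u_j$ and consider the $o_j$ sources whose AoI constraint equals $u_j$. I would show these can be served by $\lceil o_j/u_j \rceil$ channels as follows: partition the $o_j$ sources into $\lceil o_j/u_j \rceil$ subgroups, each containing at most $u_j$ sources. Within a subgroup with $m \le u_j$ sources, build a cyclic scheduler of cycle length $u_j$ that assigns time slot $i \bmod u_j$ to the $i$-th source (for $i = 1, \dots, m$), leaving the remaining $u_j - m$ slots idle. Each source in the subgroup is then served exactly once every $u_j$ slots, so its AoI never exceeds $u_j = d_n$; the constraint is met. Assign a dedicated channel to each subgroup.

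Next I would account the total channel count: summing over the $v$ distinct values, the number of channels is $\sum_{j=1}^{v}\lceil o_j/u_j\rceil$, and since the groups use pairwise disjoint channels, no two sources ever collide on a channel, so the per-channel constraint $K \ge \max_t \sum_n U_n(t)$ is respected. This establishes feasibility of the GD scheduler with the stated channel count, completing the proof.

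This lemma is essentially a bookkeeping argument, so I do not anticipate a serious obstacle; the only point requiring a little care is verifying that within each subgroup the round-robin assignment genuinely gives every source a transmission exactly every $u_j$ slots (so the worst-case AoI is exactly $u_j$, not larger), and that the ceiling $\lceil o_j/u_j\rceil$ correctly counts the subgroups when $u_j \nmid o_j$. Both follow directly from the construction, so the proof is short. (One could alternatively phrase it even more tersely by simply invoking the paragraph above the lemma, which already argues that GD requires $\sum_{j=1}^{v}\lceil o_j/u_j\rceil$ channels and satisfies the constraints, and then noting $K^\star$ is a minimum over feasible schedulers.)
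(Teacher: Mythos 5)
Your proposal is correct and follows exactly the paper's argument: the paper proves this lemma by citing the construction of the GD scheduler described in the preceding paragraph, which is precisely the round-robin, per-distinct-value grouping you spell out. You simply make explicit the feasibility check that the paper leaves implicit.
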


\begin{proof}
	The proof is based on the construction of GD.
\end{proof}

Although GD provides the simplest grouping principle and the corresponding cyclic schedulers, GD fails to use the channels efficiently. A channel is \emph{fully utilized} only when all the time slots of the channel are occupied. In $\lceil\frac{o_j}{u_j}\rceil$ channels assigned to sources with AoI constraint $u_j$, there is at most one channel not fully utilized. Therefore, given any AoI constraints $\bm{d}$, there are $v$ channels that are not effectively utilized in the worst case. It is conceivable that when $v$ is large and $o_j$ is relatively small for all the distinct values, the performance of GD is poor. To solve this problem, we should find more effective schedulers to extend the conditions of the sources which can be scheduled in a same group. 

\newtheorem{thm}{Theorem}

\section{Schedulers for Harmonic AoI Constraints}

Before introducing our two-step grouping method, we first consider some special AoI constraints and design the optimal schedulers for them. The theoretical results provided in this section will be the foundation for the first step of our grouping method. We first present a definition:

\begin{definition} \label{semi-harmonic}
	Given $\bm{x}=[x_1,\cdots,x_N]$, $\bm{u}=[u_1,\cdots,u_v]$ includes the distinct values of $\bm{x}$. $\bm{x}$ is harmonic if $x_n\in\mathbb{Z}^{+}, \forall n\in\{1,\cdots,N\}$, and $v\geq2$, $\frac{u_i}{u_1}\in\mathbb{Z}^{+}$,$\forall i\in\{2,\cdots,v\}$. Besides, the number of occurrences of $u_i$ in $\bm{x}$ is an integral multiple of $\frac{u_i}{u_1}$. 
\end{definition}

For example, $\bm{x}=[2,4,4,4,4,6,6,6]$ is harmonic, since both $4$ and $6$ are divisible by the first element in $\bm{x}$ and the occurrences of $4$ and $6$ satisfy the conditions in Definition \ref{semi-harmonic}.  

\begin{thm}\label{semi-harmonic scheduling}
	For any harmonic AoI constraints $\bm{d}=[d_1,\cdots,d_N]$, there exists an optimal cyclic scheduler.
\end{thm}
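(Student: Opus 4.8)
The plan is to exhibit an explicit cyclic scheduler that uses exactly $\lceil\sum_{n=1}^{N}1/d_n\rceil$ channels; since Lemma~\ref{lbub} says no scheduler can do better, such a scheduler is automatically optimal, and it will be cyclic by construction.

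The first step is to regroup the $N$ sources into \emph{virtual sources}, each of which can be treated as a single source with AoI constraint $u_1$. Every source with $d_n=u_1$ is its own virtual source. For $i\ge 2$, the harmonicity hypothesis (Definition~\ref{semi-harmonic}) guarantees that the multiplicity $o_i$ of $u_i$ is divisible by $m_i:=u_i/u_1\in\mathbb{Z}^{+}$, so the $o_i$ sources with constraint $u_i$ can be partitioned into $o_i/m_i$ bundles of exactly $m_i$ sources; each bundle is declared one virtual source. Let $M$ be the total number of virtual sources. A one-line count gives $M/u_1=\sum_{n=1}^{N}1/d_n$: a lone source of constraint $u_1$ contributes load $1/u_1$, and a bundle of $m_i$ sources of constraint $u_i$ contributes $m_i/u_i=1/u_1$ as well.

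The second step is to schedule the virtual sources. Spread the $M$ virtual sources over $\lceil M/u_1\rceil$ channels, at most $u_1$ per channel, and on each channel run a round-robin of period $u_1$ over its (at most $u_1$) virtual sources, inserting idle slots on the last, possibly underfull channel. Then each virtual source is served once every $u_1$ slots. If it is a lone source with $d_n=u_1$, its consecutive service gaps equal $u_1=d_n$, which is exactly what keeps $A_n(t)\le d_n$ (scheduling at $t$ makes $A_n(t+1)=1$, so the age reaches at most the gap before the next service). If it is a bundle of $m_i$ sources with constraint $u_i$, further round-robin the $m_i$ members across the turns of that virtual source, so each member is served once every $m_i u_1=u_i$ slots, again keeping $A_n(t)\le u_i=d_n$. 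Taking the common cycle length $C=\lcm(u_1,\dots,u_v)$, which is a multiple of every $m_i u_1=u_i$, makes the whole assignment periodic, so it is a cyclic scheduler in the sense of Definition~\ref{cyclic scheduler}, and in each slot at most one source is served per channel.

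The channel count is $\lceil M/u_1\rceil=\lceil\sum_{n=1}^{N}1/d_n\rceil=K^{\star}$ by Lemma~\ref{lbub}, so the construction is optimal as well as cyclic. I expect the parts needing care to be: (i) checking that divisibility of $o_i$ by $m_i$ is precisely what makes the bundling consume every source with no remainder; (ii) the AoI bookkeeping across the wrap-around of the cycle, i.e.\ verifying that every inter-service gap --- including the one straddling the end of the period --- is at most $d_n$; and (iii) the harmless handling of the last channel when $u_1\nmid M$. None of these is deep; the single genuine idea is collapsing the whole instance onto one common period $u_1$ by bundling the larger-period sources.
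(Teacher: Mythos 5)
Your proposal is correct and is essentially the paper's own construction: the paper likewise reduces to a uniform instance with all constraints equal to $u_1$ (it builds $\sum_{n}o_n u_1/u_n$ period-$u_1$ resource-block sequences via GD and then splits each sequence among $u_i/u_1$ sources of constraint $u_i$), which is exactly your virtual-source bundling, with the divisibility of $o_i$ by $u_i/u_1$ ensuring exact packing and Lemma~\ref{lbub} certifying optimality of the resulting $\lceil\sum_n 1/d_n\rceil$ channels. No substantive difference.
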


\begin{proof}
We prove Theorem \ref{semi-harmonic scheduling} by constructing a cyclic scheduler named harmonic scheduler (HS), which requires $\lceil\sum_{n=1}^{N}\frac{1}{d_n}\rceil$ channels under harmonic AoI constraints $\bm{d}$. 
We first construct a scheduler by GD for AoI constraints $\bm{d}^{\prime}$ which contains $\sum_{n=1}^{v}\frac{o_nu_1}{u_n}$ elements with value $u_1$. $o_n$ is the number of occurrences of $u_n$ in $\bm{d}$. Since elements in $\bm{d}^{\prime}$ are the same, GD is optimal for $\bm{d}^{\prime}$. For ease of notation, resource block $(t,k)$ denotes the $t$-th time slot of channel $k$. Define the sequence of resource blocks assigned to the $i$-th element in $\bm{d}^{\prime}$ as $\mathcal{R}_i,i\in\{1,\cdots,\sum_{n=1}^{v}\frac{o_nu_1}{u_n}\}$. The time gap between two consecutive resource blocks in $\mathcal{R}_i$ is $u_1$. Therefore we assign $\{\mathcal{R}_1,\cdots,\mathcal{R}_{o_1}\}$ to $o_1$ sources with AoI constraints $u_1$. For sources with AoI constraints $u_n$, we find the first sequence of resource blocks that has not been fully occupied, denoted by $\mathcal{R}_q$. Then we identify $(t_p,k_p)$, the first available resource block in $\mathcal{R}_q$, and assign resource blocks $(t_p+mu_n,k_p), m\in\mathbb{N}$ to the source. 
Therefore each sequence of resource blocks can schedule at most $\frac{u_n}{u_1}$ sources with AoI constraint $u_n$, and $o_n$ sources with AoI constraints $u_n$ require $\frac{o_nu_1}{u_n}$ sequences of resource blocks. Such requirements can be satisfied because GD provides $\sum_{n=1}^{v}\frac{o_nu_1}{u_n}$ sequences of resource blocks. Since the lower bound of channels are the same for $\bm{d}$ and $\bm{d}^{\prime}$ and HS requires the same number of channels as GD,  HS is optimal given any harmonic AoI constraints.
\end{proof}

Theorem 1 expands the range of sources which can be packed into a same group to the sources with harmonic AoI constraints. Also, HS is more efficient compared with GD. For example, given harmonic $\bm{d}=[2,4,4,4,4,6,6,6]$, GD requires 3 channels while HS requires 2 channels.

Furthermore, we can also handle the cases when $\bm{d}$ is a combination of two harmonic AoI constraints with different bases. The base is defined as the smallest element of the harmonic AoI constraints. 

\begin{lemma} \label{twovalue}
	For AoI constraints with two distinct values: $u_1$ and $u_2$, there exists an optimal cyclic scheduler if 
	$\frac{o_1}{u_1}+\frac{o_2}{u_2}$ is an integer. $o_1,o_2$ is the number of occurrences of $u_1$ and $u_2$ in AoI constraints, respectively. 
\end{lemma}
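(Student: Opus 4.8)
The plan is to construct an explicit optimal cyclic scheduler for AoI constraints consisting of $o_1$ sources with constraint $u_1$ and $o_2$ sources with constraint $u_2$, under the hypothesis that $\frac{o_1}{u_1}+\frac{o_2}{u_2}\in\mathbb{Z}^+$. Write $K=\frac{o_1}{u_1}+\frac{o_2}{u_2}$; by Lemma \ref{lbub} it suffices to exhibit a valid scheduler using exactly $K$ channels, since $K=\lceil\sum_n 1/d_n\rceil$. Without loss of generality assume $u_1 < u_2$ (if $u_1=u_2$ we are in the single-value case already handled, and if the roles are swapped we relabel). The natural cycle length to work with is $C=\lcm(u_1,u_2)$, so that within one cycle a source with constraint $u_i$ is served exactly $C/u_i$ times at even spacing $u_i$.

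First I would lay out the $K$ channels over a cycle of length $C$ as a grid of $KC$ resource blocks, and think of each channel as providing, for the ``fast'' sources, $C/u_1$ equally-spaced slots; I would first pack as many full $u_1$-spaced source-streams as possible. Concretely, greedily assign the $o_1$ sources with constraint $u_1$: each consumes one ``$u_1$-stream'' (a choice of offset $r\in\{0,\ldots,u_1-1\}$ on some channel), and a single channel can host exactly $u_1$ such streams. This uses $\lceil o_1/u_1\rceil$ channels, all but (possibly) one of them fully utilized. The key arithmetic step is to handle the leftover: write $o_1 = a u_1 + b$ with $0\le b < u_1$; the hypothesis $\frac{o_1}{u_1}+\frac{o_2}{u_2}\in\mathbb{Z}$ forces $\frac{b}{u_1}+\frac{o_2}{u_2}\in\mathbb{Z}$, i.e. $b u_2 + o_2 u_1 \equiv 0 \pmod{u_1 u_2}$, and combined with $u_1\mid u_2$ or the general $\lcm$ relation this says the $u_1-b$ remaining $u_1$-stream-slots on the partially-filled channel, together with the slots on the remaining $K-a-1$ channels, decompose exactly into $o_2$ streams of spacing $u_2$. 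I would prove this by observing that a $u_1$-stream of slots, subsampled at spacing $u_2/u_1$ (an integer), yields $u_2/u_1$ disjoint $u_2$-streams — so each leftover $u_1$-stream splits into $u_2/u_1$ usable $u_2$-streams, and the count $(u_1-b)\cdot\frac{u_2}{u_1} + (K-a-1)\cdot u_2$ must equal $o_2$, which is exactly the content of the divisibility hypothesis.

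The main obstacle I anticipate is the bookkeeping in this last packing step: one must verify not just that the \emph{number} of available $u_2$-streams matches $o_2$, but that they can be realized simultaneously as disjoint slot-sets respecting the fixed grid — i.e. that greedily carving $u_2$-streams out of the residual $u_1$-streams on the partially-used channel and out of the wholly-free channels never collides, and that each assigned source indeed sees inter-service gaps of at most $u_i$ (the spacing is exactly $u_i$, so AoI never exceeds $u_i$, giving feasibility). This is essentially the same interleaving argument used in the proof of Theorem \ref{semi-harmonic scheduling} for the harmonic case, applied here to the two-value residual, so I would phrase it as: reduce the leftover configuration to an instance covered by (the construction in) Theorem \ref{semi-harmonic scheduling} — namely $u_1-b$ copies of $u_1$ together with the free channels form harmonic constraints with base $u_1$ whose $u_2$-occurrence count $o_2$ is a multiple of $u_2/u_1$ by hypothesis — and invoke HS to finish. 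Closing the argument then only requires checking the channel count adds up to $K$, which is immediate from the definition of $K$.
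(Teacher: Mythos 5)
There is a genuine gap: your construction implicitly assumes $u_1\mid u_2$, but the lemma makes no such assumption. Both of your key steps rely on it — ``a $u_1$-stream of slots, subsampled at spacing $u_2/u_1$ (an integer), yields $u_2/u_1$ disjoint $u_2$-streams'' presupposes $u_2/u_1\in\mathbb{Z}^{+}$, and your closing reduction to Theorem \ref{semi-harmonic scheduling} requires the two values to form harmonic constraints, which by Definition \ref{semi-harmonic} again needs $u_2/u_1\in\mathbb{Z}^{+}$. When $u_1\mid u_2$ the hypothesis $\frac{o_1}{u_1}+\frac{o_2}{u_2}\in\mathbb{Z}$ does force $o_2$ to be a multiple of $u_2/u_1$, so that case is already covered by HS and the lemma would be redundant. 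The whole point of Lemma \ref{twovalue} (and the reason it is needed for Theorem \ref{exclude1}) is the non-divisible case, e.g.\ $u_1=4$, $u_2=6$, $o_1=2$, $o_2=3$, where $\frac{2}{4}+\frac{3}{6}=1$: here a $4$-spaced stream $\{t,t+4,t+8,\dots\}$ contains no subset with spacing exactly $6$, so the leftover $u_1$-streams on your partially filled channel cannot be carved into $u_2$-streams at all, and your packing collapses.

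The paper's construction avoids this by working with $\gcd(u_1,u_2)$ rather than the ratio $u_2/u_1$: it uses the linear Diophantine structure of $\frac{o_1}{u_1}+\frac{o_2}{u_2}=b$ to show $o_1$ is a multiple of $u_1/\gcd(u_1,u_2)$ and $o_2$ a multiple of $u_2/\gcd(u_1,u_2)$, then partitions the cycle $\lcm(u_1,u_2)$ into blocks of $\gcd(u_1,u_2)$ consecutive slots spanning all $K_{\text{LB}}$ channels, spreads the $u_1$-sources evenly over the first $u_1/\gcd(u_1,u_2)$ blocks (repeating with period $u_1$) and the $u_2$-sources over the first $u_2/\gcd(u_1,u_2)$ blocks (repeating with period $u_2$); the per-block counts sum to exactly $b\gcd(u_1,u_2)$, so the packing is exact. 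Your counting identity $(u_1-b)\frac{u_2}{u_1}+(K-a-1)u_2=o_2$ is arithmetically consistent with the hypothesis, but the realizability of those streams on the fixed grid fails without a $\gcd$-based (coset/residue-class) argument replacing the divisibility you assumed. As written, the proposal proves only the harmonic special case, not the lemma.
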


\begin{proof} 
	We prove Lemma \ref{twovalue} by constructing a scheduler for two distinct values (STV). 
	According to Lemma \ref{lbub}, the lower bound of the minimum number of channels is $K_{\text{LB}}=\frac{o_1}{u_1}+\frac{o_2}{u_2}$. If STV requires $K_{\text{LB}}$ channels for any $u_1,u_2,o_1,o_2$ satisfying the conditions in Lemma \ref{twovalue}, then the proof is completed. 
		\begin{algorithm}[h]\label{STD}
		\caption{Scheduler for two distinct values (STV)}
		\begin{algorithmic}[1]
			\REQUIRE  
			AoI constraint $\bm{d}$ with two distinct values: $\{u_1,u_2\}$. $o_1,o_2$ are the number of occurrences of $u_1,u_2$. 
			\ENSURE 
			A cyclic scheduler $\bm{\pi}$;
			\STATE Set the cycle length of $\bm{\pi}$ as $C=\lcm\{u_1,u_2\}$ and the number of channels required by $\bm{\pi}$ is $K_{\text{LB}}=\frac{o_1}{u_1}+\frac{o_2}{u_2}$. 
			\STATE Divide $C$ slots of $\bm{\pi}$ into $C/\gcd(u_1,u_{2})$ groups, each with $\gcd(u_{1},u_{2})K_{\text{LB}}$ resource blocks. 
			\STATE Assign $\frac{o_1\gcd(u_{1},u_{2})}{u_{1}}$ resource blocks in the first group to sources with AoI constraint $u_1$. In the other groups, assign resource blocks in the same positions as in the first group to sources with AoI constraint $u_1$.
			\STATE Schedule sources with AoI constraint $u_2$ similarly. 
		\end{algorithmic}
	\end{algorithm}
	
	For generality, we assume $u_1<u_2$ and $K_{\text{LB}}=b$. 
	Define $\gcd(u_1, u_2)$ as the greatest common divisor of $u_1$ and $u_2$. Then according to linear Diophantine equation, $o_1$ and $o_2$ must satisfy the following conditions: $o_1=x_0+mu_1/\gcd(u_1, u_2)$, $o_2=y_0-mu_2/\gcd(u_1, u_2)$, where $m,x_0,y_0$ are integers. $(x_0, y_0)$ is any set of solutions to equation $\frac{x}{u_1}+\frac{y}{u_2}=b$. 
	If we choose $x_0=0, y_0=bu_2$, then we can prove that $o_1$ is divisible by $u_1/\gcd(u_1, u_2)$, and  $o_2$ is divisible by $u_2/\gcd(u_1, u_2)$.

	If $\gcd(u_1,u_2)=1$, then $o_1, o_2$ are multiples of $u_1, u_2$, respectively. In this case, STV can be reduced to GD. When $\gcd(u_1,u_2)>1$, we set the cycle length of STV as the least common multiple of $u_1$ and $u_2$, denoted by $\lcm(u_1,u_2)$. Then we divide $\lcm(u_1,u_2)$ time slots in a cycle into $a=\lcm(u_1,u_2)/\gcd(u_1, u_2)$ groups. 
	Each group includes $\gcd(u_1, u_2)$ consecutive time slots 
	 and $b\gcd(u_1, u_2)$ resource blocks. 
	
	We then separate $o_1$ sources into the first $u_1/\gcd(u_1, u_2)$ groups, each group contains $\frac{o_1\gcd(u_1, u_2)}{u_1}$ sources and each source is assigned one resource block. Besides, positions of the resource blocks assigned to the sources with the same AoI constraint should be the same in each group. 
	For the $j$-th group ($j=eu_1/\gcd(u_1, u_2)+a^{\prime}$), the scheduler just `copies' the scheduling decisions in group $a^{\prime}$, where $a^{\prime}\in\{1,\cdots,u_1/\gcd(u_1, u_2)\},e\in\{1,\cdots,\lcm\{u_1,u_2\}/u_1-1\}$. The sources with AoI constraints $u_2$ can be scheduled similarly.
	
	Since $\frac{o_1\gcd(u_1, u_2)}{u_1}+\frac{o_2\gcd(u_1, u_2)}{u_2}=b\gcd(u_1,u_2)$, STV requires $K_{\text{LB}}$ channels. The lemma is proved.
\end{proof}

\begin{thm} \label{exclude1}
	For two harmonic AoI constraints $\bm{d}_1 =
	[d_{1,1}, \cdots ,d_{1,i}]$ with length $i$ and $\bm{d}_2 =
	[d_{2,1}, \cdots ,d_{2,j}]$ with length $j$, there exists an optimal cyclic scheduler for the $i+j$ sources if $\sum_{n=1}^{i}\frac{1}{d_{1,n}}+\sum_{n=1}^{j}\frac{1}{d_{2,n}}\in\mathbb{Z}^{+}$. 
\end{thm}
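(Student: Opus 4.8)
The plan is to combine the two constructions already in hand, the harmonic scheduler HS of Theorem~\ref{semi-harmonic scheduling} and the two-value scheduler STV of Lemma~\ref{twovalue}: I would first collapse each harmonic block into an equivalent single-value block, schedule the two resulting single-value blocks jointly with STV, and finally expand that joint scheduler back to the $i+j$ original sources using the sub-sampling device from the proof of Theorem~\ref{semi-harmonic scheduling}. Concretely, let $a_1$ and $a_2$ be the bases (smallest elements) of $\bm d_1$ and $\bm d_2$, and set $L_1=\sum_{n=1}^{i}\frac1{d_{1,n}}$ and $L_2=\sum_{n=1}^{j}\frac1{d_{2,n}}$, so that $L_1+L_2\in\mathbb Z^{+}$ by hypothesis. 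As in the proof of Theorem~\ref{semi-harmonic scheduling}, associate with $\bm d_1$ the single-value vector $\bm d_1'$ consisting of $M_1=\sum_n\frac{o_n a_1}{u_n}$ copies of $a_1$, where the sum runs over the distinct values $u_n$ of $\bm d_1$ with multiplicities $o_n$; harmonicity of $\bm d_1$ makes every $\frac{o_n a_1}{u_n}$ an integer, so $M_1\in\mathbb Z^{+}$ and $M_1/a_1=L_1$. Define $\bm d_2'$ analogously, with $M_2\in\mathbb Z^{+}$ copies of $a_2$ and $M_2/a_2=L_2$.

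If $a_1=a_2$, the combined list $\bm d_1\cup\bm d_2$ is itself harmonic (every value is a multiple of $a_1$, and the multiplicity of any value $u$ is a multiple of $u/a_1$ since each of the two contributions is), so Theorem~\ref{semi-harmonic scheduling} finishes the proof; assume therefore $a_1\neq a_2$. I would then apply Lemma~\ref{twovalue} to $\bm d'=\bm d_1'\cup\bm d_2'$, which has the two distinct values $a_1,a_2$ and load $\frac{M_1}{a_1}+\frac{M_2}{a_2}=L_1+L_2\in\mathbb Z^{+}$, obtaining an optimal cyclic scheduler $\bm\pi'$ for $\bm d'$ that uses exactly $L_1+L_2$ channels. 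Reading off the STV construction, each value-$a_1$ (resp. value-$a_2$) source of $\bm\pi'$ is served along an arithmetic progression of resource blocks of period $a_1$ (resp. $a_2$) on a single channel. Repeating $\bm\pi'$, I may take its cycle length to be $C=\lcm\big(d_{1,1},\dots,d_{1,i},d_{2,1},\dots,d_{2,j}\big)$, a common multiple of $\lcm(a_1,a_2)$ and of every $d_{1,n}$ and $d_{2,n}$.

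The final step expands $\bm\pi'$ exactly as HS expands GD. For each distinct value $u_n$ of $\bm d_1$, take $\frac{o_n a_1}{u_n}$ of the $M_1$ period-$a_1$ progressions occupied by the value-$a_1$ sources of $\bm\pi'$; since $a_1\mid u_n\mid C$, each such progression splits into $u_n/a_1$ disjoint sub-progressions of period $u_n$, one of which is given to each of $u_n/a_1$ real sources with constraint $u_n$. A source served with period $u_n$ has AoI at most $u_n=d_{1,n}$, so all constraints of $\bm d_1$ are met; summed over its distinct values this consumes $\sum_n\frac{o_n a_1}{u_n}=M_1$ progressions, exactly the number available, and the sources of $\bm d_2$ are treated identically. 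The result is a cyclic scheduler that satisfies all $i+j$ AoI constraints with $L_1+L_2=\lceil L_1+L_2\rceil$ channels, which is optimal by Lemma~\ref{lbub}.

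I expect the main obstacle to be the bookkeeping of this last step: checking that the integrality forced by harmonicity makes the number of period-$a_1$ progressions demanded by $\bm d_1$ coincide with the number $M_1$ that STV supplies (and likewise for $\bm d_2$), and that taking the cycle length $C$ to be a common multiple of all constraints is exactly what makes the period-$u_n$ sub-progressions genuinely periodic within a single cycle. Everything else is a faithful reuse of Theorem~\ref{semi-harmonic scheduling} and Lemma~\ref{twovalue}.
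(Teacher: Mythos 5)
Your proposal is correct and follows essentially the same route as the paper: collapse each harmonic block to copies of its base value, invoke Lemma~\ref{twovalue} (STV) on the resulting two-value instance whose load $L_1+L_2$ is an integer by hypothesis, and then redistribute each period-$a_1$ (resp.\ period-$a_2$) sequence of resource blocks among the original sources exactly as HS does in Theorem~\ref{semi-harmonic scheduling}. Your explicit treatment of the $a_1=a_2$ case and of the cycle length is slightly more careful than the paper's sketch, but the construction is the same.
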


\begin{proof}
	The proof is based on construction. We first use STV to construct a cyclic scheduler for $n_1$ sources with AoI constraint $d_{1,1}$ and $n_2$ sources with AoI constraint $d_{2,1}$, where $n_1=d_{1,1}\sum_{n=1}^{i}\frac{1}{d_{1,n}}$, $n_2=d_{2,1}\sum_{n=1}^{j}\frac{1}{d_{2,n}}$. Since $\frac{n_1}{d_{1,1}}+\frac{n_2}{d_{2,1}}$ is an integer, STV is optimal according to Lemma \ref{twovalue}. Let $\mathcal{R}_1$ and $\mathcal{R}_2$ denote the sequences of resources blocks allocated to the $n_1$ and $n_2$ sources. We use HS to distribute $\mathcal{R}_1$ and $\mathcal{R}_2$ to the sources with AoI constraints $\bm{d}_1$ and $\bm{d}_2$. Therefore, such a scheduler requires $\frac{n_1}{d_{1,1}}+\frac{n_2}{d_{2,1}}$ channels, which is also the lower bound of the number of channels. The proof is completed.
\end{proof}

\section{Two-step Grouping Algorithm}

Now we can design optimal schedulers for harmonic AoI constraints. 
However, how to construct a scheduler with efficient channel utilization for generalized AoI constraints is still a challenging problem.  

To utilize the theoretical results obtained in section IV, we propose a two-step grouping algorithm (TGA): 
\begin{itemize}
\item Step 1: Given a set of sources $\bm{N}=\{1,\cdots,N\}$ and the corresponding AoI constraints $\bm{d}=[d_1,\cdots,d_N]$, we first search the sources with harmonic AoI constraints and design a scheduler for them.
\item Step 2: For the other sources without harmonic AoI constraints, we design a new scheduler with optimized channel utilization for them.
\end{itemize}

In the following of this section, we first introduce our algorithm of identifying the sources with harmonic AoI constraints and constructing schedulers for them. Secondly, we display the grouping algorithm for the other sources without such a special property.
 
\subsection{Step 1: Harmonic Sources Identifying \& Scheduling}

According to Theorems \ref{semi-harmonic scheduling} and \ref{exclude1}, if $\bm{d}$ is harmonic or $\bm{d}$ includes any two harmonic AoI constraints satisfying the conditions in Theorem \ref{exclude1}, the corresponding sources can also be packed in a group for efficient channel utilization.

Therefore, if $\bm{d}$ is a combination of $h$ harmonic AoI constraints, denoted by $\bm{d}_1,\cdots,\bm{d}_h$, we can divide the sources into at most $h$ groups. For each group, we use STV and HS to design a scheduler and assign channels independently. The number of channels required by such a procedure is no more than $\sum_{i=1}^{h}Q_i$, where $Q_i=\lceil\sum_{j\in\bm{d}_i}\frac{1}{d_{i,j}}\rceil$ and $\bm{d}_i=[d_{i,1},\cdots,d_{i,j},\cdots]$.

Based on the analysis above, our harmonic sources identifying (HSI) algorithm is proposed in Algorithm 2. Given AoI constraints $\bm{d}$, we first derive $\bm{u}=[u_1,\cdots,u_v]$ which includes all the distinct values of $\bm{d}$, and arrange $\bm{u}$ in ascending order. HSI is divided into two parts. The first part is completed in $v$ iterations. In the $i$-th iteration, we search the sources which possess harmonic AoI constraints with base $u_i$. $\bm{d}_{i}=[d_{i,1},\cdots,d_{i,m_i}]$ with length $m_i$ denotes the harmonic AoI constraints with base $u_i$. Besides, the elements in $\bm{d}_{i}$ is arranged in ascending order. If we directly construct a scheduler for $\bm{d}_i$ by HS, there will be at most one channel which is not fully utilized. This is due to the introduction of GD in HS. Since the channels are assigned to each group independently, such an inefficient use of channel seriously degrade the performance of the grouping. Therefore, we only construct a scheduler based on HS for the first $p$ elements in $\bm{d}_{i}$, where $p$ satisfies $\sum_{n=1}^{p}\frac{1}{d_{i,n}}=\lfloor \sum_{n=1}^{m_i}\frac{1}{d_{i,n}}\rfloor$. The other $m_i-p$ sources will be scheduled in the second part of HSI or by the heuristic grouping algorithm, which is discussed in the following subsection. After the $i$-th iteration, $p$ sources with AoI constraints $[d_{i,1},\cdots,d_{i,p}]$ are excluded from $\bm{N}$.

\begin{algorithm}[h]\label{pre}
	\caption{Harmonic Sources Identifying (HSI)}
	\begin{algorithmic}[1]
		\REQUIRE  
		AoI constraint vector $\bm{d}$ and the set of sources $\bm{N}$;
		\ENSURE 
		Sources with harmonic AoI constraints $\bm{N}^{\prime}$, other sources with generalized constraints $\bar{\bm{N}}^{\prime}=\bm{N}\setminus\bm{N}^{\prime}$, a cyclic scheduler $\bm{\pi}_{H}$ for $\bm{N}^{\prime}$.
		\STATE Initialize $\bm{N}^{\prime}=\emptyset$. $\bm{u}=\{u_1,\cdots,u_v\}$ includes distinct values in $\bm{d}$. Arrange $\bm{u}$ in ascending order. $o_j$, the $j$-th element in $\bm{o}$, is the number of occurrences of $u_j$ in $\bm{d}$. 
		\FOR{$i\in[1,v]$}
		\STATE 
		$\bm{f}_i=\{j|\frac{u_j}{u_i}\in\mathbb{Z}^{+}\}$, $\bm{d}_i=\emptyset$
		\FOR{$n\in\bm{f}_i$}
		\STATE Put $\lfloor\frac{o_{n}u_i}{u_n}\rfloor \frac{u_{n}}{u_i}$ elements with value of $u_n$ into $\bm{d}_i$.
		\ENDFOR
		\STATE 
		Arrange 
		$\bm{d}_i=[d_{i,1},\cdots,d_{i,m_i}]$ in ascending order. 
		\STATE Find $p$ such that $\sum_{n=1}^{p}\frac{1}{d_{i,n}}=\lfloor \sum_{n=1}^{m_i}\frac{1}{d_{i,n}}\rfloor$. 
		Add $p$ sources with AoI constraints $[d_{i,1},\cdots,d_{i,p}]$ into $\bm{N}^{\prime}$. 
		\STATE Design a scheduler $\bm{\pi}_i$ for the $p$ sources based on HS.
		\STATE $\bar{\bm{N}}^{\prime}=\bm{N}\setminus\bm{N}^{\prime}$. Update $\bm{o}$ for sources in $\bar{\bm{N}}^{\prime}$.   
		\ENDFOR
		\FOR{$i\in[1,v]$}
		\FOR {$j\in \{j|\gcd(u_j,u_i)>1,u_j>u_i, \frac{u_j}{u_i}\notin\mathbb{Z}^{+}\}$}
		\STATE Repeat Line (3)-(7) and obtain $\bm{d}_i$ with length $m_i$.
		\STATE 
		$\bm{f}_j=\{e|\frac{u_e}{u_j}\in\mathbb{Z}^{+},\frac{u_e}{u_i}\notin\mathbb{Z}^{+}\}$
		\STATE Repeat Line (4)-(7) and obtain $\bm{d}_j$ with length $m_j$. 
		\STATE Let $s_i=d_{i,1}\sum_{n=1}^{m_i}\frac{1}{d_{i,n}}$,$s_j=d_{j,1}\sum_{n=1}^{m_j}\frac{1}{d_{j,n}}$
		\STATE Let $b=\lfloor \frac{s_i}{u_i}+\frac{s_j}{u_j}\rfloor$, find the largest positive integer $s_i^{\prime}\leq s_i$ such that $s_j^{\prime}=\frac{1}{u_i}(bu_iu_j-u_js_i^{\prime})\in\mathbb{Z}^{+}$
		\IF {$s_i^{\prime}$ exists and $ s_j^{\prime} \leq s_j$}
		\STATE 
		Find $p_i,p_j$ such that $\sum_{n=1}^{p_z}\frac{1}{d_{z,n}}=\frac{s_z^{\prime}}{u_z}$ , $z\in\{i,j\}$. Add $p_i+p_j$ sources with AoI constraints $[d_{i,1},\cdots,d_{i,p_i},d_{j,1},\cdots,d_{j,p_j}]$ into $\bm{N}^{\prime}$. 
		\STATE Design a scheduler $\bm{\pi}_i^{\prime}$ for the $p_i+p_j$ sources based on STV and HS.
		\STATE $\bar{\bm{N}}^{\prime}=\bm{N}\setminus\bm{N}^{\prime}$ and update $\bm{o}$ for sources in $\bar{\bm{N}}^{\prime}$.   
		\ENDIF
		\ENDFOR
		\ENDFOR
		\STATE $\bm{\pi}_H$ is a combination of $\bm{\pi}_i$ and $\bm{\pi}_i^{\prime}$, $\forall i\in\{1,\cdots,v\}$. The number of channels required by $\bm{\pi}_H$ is $\sum_{n\in\bm{N}^{\prime}}\frac{1}{d_n}$.
	\end{algorithmic}
\end{algorithm}	

In the second part, for distinct value $u_i$, we find sources with AoI constraint $u_j$ such that $\gcd(u_i,u_j)>1$. We first construct the harmonic AoI constraints $\bm{d}_i, \bm{d}_j$ with bases of  $u_i$ and $u_j$.  Then we find the sources with $\bm{d}_i, \bm{d}_j$ satisfying the conditions in Theorem \ref{exclude1} and design an optimal scheduler for them based on STV and HS. After the two parts, an optimal scheduler is obtained for all the sources searched by HSI. In addition, we do not take the cases when $\gcd(u_i,u_j)=1$ into account in the second part. This is because the sum of reciprocals of constraints in the two harmonic vectors will be integers if $\gcd(u_i,u_j)=1$, and such sources are found in the first part.

\subsection{Step 2: heuristic grouping algorithm}

After HSI, we find harmonic groups and design an optimal cyclic scheduler for them. For the other sources without harmonic AoI constraints, HS is not optimal, hence we have to design a new scheduler for them. Before introducing the heuristic grouping algorithm (HGA), a definition is presented. 

\begin{algorithm}[h]\label{SS}
	\caption{Consecutively divisible Scheduler (CS)}
	\begin{algorithmic}[1]
		\REQUIRE  
		 $\bm{l}=[l_1,\cdots,l_N]$ with $l_1\leq\cdots\leq l_N$;
		\ENSURE 
		A cyclic scheduler $\bm{\pi}_S$;
		\STATE Find the minimum integer $a$ satisfying $al_1\in\mathbb{Z}^{+}$. Construct $\bm{\pi}^{\prime}$ requiring $\lceil\frac{1}{a}\lceil\sum_{n=1}^{N}\frac{1}{l_n}\rceil\rceil$ channels. The cycle length $C_{\bm{\pi}^{\prime}}=al_N$ if $l_N\in\mathbb{Z}^+$, otherwise $C_{\bm{\pi}^{\prime}}=a^2l_N$.
		\STATE Let $\bm{M}=[M_1,\cdots,M_{C_{\bm{\pi}^{\prime}}/a}]$ with each $M_j=\lceil\sum_{n=1}^{N}\frac{1}{l_n}\rceil$
		\FOR {$i\in[1,N]$}
		\STATE Find $p$ such that $M_p$ is the first largest element in $[M_1,\cdots,M_{\lceil l_i\rceil}]$
		\STATE Find available  resource block $(t_p,k_p)$ such that  $\sum_{n=1}^{N}U_n(t_p)$ is the first smallest element in $[\sum_{n=1}^{N}U_n((p-1)a+1),\cdots,\sum_{n=1}^{N}U_n(\min\{pa,al_i\})]$.
		\STATE Assign resource blocks $(t_p+mal_i,k_p)$, $m\in\{0,\cdots,C_{\bm{\pi}^{\prime}}/(al_i)-1\}$ to the source $i$. Let $M_r=M_r-1$, for $r=\lceil\frac{t_p+mal_i}{a}\rceil$, $m\in\{0,\cdots,C_{\bm{\pi}^{\prime}}/(al_i)-1\}$
		\ENDFOR
		\STATE Set the cycle length of $\bm{\pi}_S$ as $C_S=C_{\bm{\pi}^{\prime}}/a$. 
		\STATE The number of channels required by $\bm{\pi}_S$ is $\lceil\sum_{n=1}^{N}\frac{1}{l_n}\rceil$. 
		\FOR {$i\in[1,C_S]$}
		\STATE $\bm{N}_i$ denotes the sources scheduled by $\bm{\pi}^{\prime}$ in slots $[(i-1)a+1,ia]$. 
		\STATE In $\bm{\pi}_S$, set $U_n(i)=1$ for each source $n$ in $\bm{N}_i$.                                                                         
		\ENDFOR
		
	\end{algorithmic}
\end{algorithm}

\begin{definition} \label{harmonic}
	A vector $\bm{x}=[x_1,x_2,\cdots,x_N]$ is consecutively divisible if $x_i\in \mathbb{R}^{+}$, $x_i\geq1$, $\forall i\in\{1,\cdots,N\}$ and $x_i/x_{i-1}\in\mathbb{Z}^{+},\forall i\in\{2,\cdots,N\}$. 
\end{definition}

In Definition \ref{harmonic}, $\mathbb{R}^{+}$ denotes positive real numbers. If $[d_1,\cdots,d_N]$ is consecutively divisible, then a consecutively divisible AoI constraints scheduler (CAS) requiring $\lceil\sum_{n=1}^{N}\frac{1}{d_n}\rceil$ channels can be constructed \cite{liu2021aion}. 
CAS can be constructed within $N$ iterations. 
In the $n$-th iteration, we find the smallest $t_n\leq d_n$ and the smallest $k_n\leq K$ such that resource block $(t_n,k_n)$ is available. Then resource blocks $\mathcal{R}_n=\{(t,k_n)|t=t_n+md_n, m\in\mathbb{N}\}$ are assigned to source $n$, namely $U_n(t_n+md_n)=1$. According to Lemma \ref{lbub}, CAS is optimal for consecutively divisible AoI constraints. 

For generalized AoI constraints, CAS can not be used directly. According to \cite{liu2021aion}, if there exist consecutively divisible average transmission intervals $\bm{l}=[l_1,l_2,\cdots,l_N]$ such that $l_n\leq d_n, \forall n\in\{1,\cdots,N\}$, a consecutively divisible scheduler (CS) requiring $\lceil\sum_{n=1}^{N}\frac{1}{l_n}\rceil$ channels can be constructed as a generalized version of CAS. The details of CS is shown in Algorithm 3.

With CS, we can design a scheduler under any AoI constraints. Given generalized AoI constraints, a direct idea is to find consecutively divisible average transmission intervals requiring the minimum number of channels based on CS, i.e.,
	  \begin{equation}\label{Aion}
	\begin{split}
		\min_{\bm{l}} \,\,& \left\lceil\sum_{n=1}^{N}\frac{1}{l_n}\right\rceil,\\
		s.t.& 1\leq\l_n\leq d_n,\forall n=1,2,\cdots,N,\\
		& \bm{l}\text{ is consecutively divisible.}
	\end{split}
\end{equation}

 Let $K_1$ denote the number of channels required by the solution of \eqref{Aion}. However, such a method may lead to inefficient channel utilization, which is shown in the following example.

\begin{Example} \label{example 2}
	Consider sources $\{A, B, C, D, E, F, G, H, I, J\}$ and generalized AoI constraints $\bm{d}=[3,5,5,5,6,6,6,7,7,7]$, then $\bm{l}^{\star}=[2.5,5,5,5,5,5,5,5,5,5]$ is the solution of \eqref{Aion}. The cyclic scheduler constructed by CS is:
	\vspace{-0.06in}
	\begin{center}
		\setlength{\tabcolsep}{1.3mm}{
			\begin{tabular}{l c c c c c c c c c c c p{ 0.4 cm}} 		
				time slot&1&2&3&4&5&6&7&8&9&10\\
				channel 1& A & B & A & C & D & A & B & A & C & D & $\cdots$	\\
				channel 2 &E&F&G & H & I & E & F & G & H & I & $\cdots$\\
				channel 3& J & $\square$ & $\square$ & $\square$ & $\square$ & J & $\square$ & $\square$ & $\square$ & $\square$ & $\cdots$	\\
		\end{tabular}}
	\end{center}
	CS requires 3 channels while the lower bound of the number of channels is 2. Besides, channel 3 is not fully utilized. We can construct a cyclic scheduler which meets the lower bound.
	\begin{center}
		\setlength{\tabcolsep}{1.3mm}{
			\begin{tabular}{l c c c c c c c c c c c c c} 		
				time slot&1&2&3&4&5&6&7&8&9&10&11&12\\
				channel 1& A & B & A & C & D & A & B & A & C & D & A & B & $\cdots$	\\
				channel 2 &E&F&G & H & I &J& E & F & G & H & I & J& $\cdots$\\
				
		\end{tabular}}
	\end{center}
	The scheduler above possesses average transmission intervals $\bm{l}=[2.5,5,5,5,6,6,6,6,6,6]$, which is not consecutively divisible. 
	The cycle length $C$ is the least common multiple of cycle length of channel 1 and 2, namely $C=\lcm\{5,6\}=30$. 
\end{Example}

$\bm{l}$ in Example \ref{example 2} is actually a combination of two different consecutively divisible vectors, namely $\bm{l}_1=[2.5,5,5,5]$ and $\bm{l}_2=[6,6,6,6,6,6]$. The cyclic scheduler is also a combination of two schedulers derived by CS, with average transmission interval vectors $\bm{l}_1$ and $\bm{l}_2$, respectively. This phenomenon leads to the following theorem. 

\begin{thm}\label{thm2}
	For any average transmission interval vector
	$\bm{l}$ which is a combination of $h$ consecutively divisible vectors $\bm{l}_1,\bm{l}_2,\cdots,\bm{l}_h$, there exists a cyclic scheduler requiring $\sum_{j=1}^{h}Q_j$ channels, where $Q_j=\lceil\sum_{i\in \bm{l}_j}\frac{1}{l_{j,i}}\rceil$ and $\bm{l}_j=[l_{j,1},\cdots,l_{j,i},\cdots]$.
\end{thm}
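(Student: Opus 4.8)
The plan is to give a constructive proof that mirrors the construction in Example~\ref{example 2}: split the sources according to the decomposition of $\bm{l}$, build an optimal per-group scheduler with CS on a private block of channels, and then glue the pieces together periodically. First I would make the notion of ``combination'' precise: saying that $\bm{l}$ is a combination of $\bm{l}_1,\dots,\bm{l}_h$ means the multiset of entries of $\bm{l}$ is the disjoint union of the entries of the $\bm{l}_j$, which induces a partition of the source set $\{1,\dots,N\}$ into groups $\bm{N}_1,\dots,\bm{N}_h$, where $\bm{N}_j$ carries the transmission intervals listed in $\bm{l}_j=[l_{j,1},\dots,l_{j,i},\dots]$.

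Next, for each $j\in\{1,\dots,h\}$, since $\bm{l}_j$ is consecutively divisible, I would invoke CS (Algorithm~3, which is justified in \cite{liu2021aion}) on input $\bm{l}_j$ to obtain a cyclic scheduler $\bm{\pi}^{(j)}$ that uses exactly $Q_j=\bigl\lceil\sum_{i\in\bm{l}_j}\frac{1}{l_{j,i}}\bigr\rceil$ channels and in which every source $i\in\bm{N}_j$ is served with the average transmission interval prescribed by $l_{j,i}$ (so that, whenever $l_{j,i}\le d_i$, its AoI constraint is met). Let $C_j$ denote the cycle length of $\bm{\pi}^{(j)}$. I would then allocate $h$ pairwise disjoint blocks of channels, the $j$-th block containing $Q_j$ channels, and run $\bm{\pi}^{(j)}$ on the $j$-th block. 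The total number of channels used is $\sum_{j=1}^{h}Q_j$, as claimed.

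The final step is the gluing argument. Set $C=\lcm(C_1,\dots,C_h)$ and define the combined scheduler $\bm{\pi}$ by letting its restriction to the channels of block $j$ at slot $t$ equal $\bm{\pi}^{(j)}\bigl(((t-1)\bmod C_j)+1\bigr)$. I would then check three things: (i) $\bm{\pi}$ is cyclic with cycle length dividing $C$, since each block's pattern has period $C_j\mid C$; (ii) no channel is shared by two sources, because the blocks are disjoint and each source lies in exactly one group; and (iii) every source retains the scheduling pattern it had under its own CS scheduler, because the periodic extension of a period-$C_j$ pattern to period $C$ introduces no gap longer than those already present in $\bm{\pi}^{(j)}$ (here $C_j\mid C$ is exactly what is needed). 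Combining (i)--(iii) yields a cyclic scheduler that realizes $\bm{l}$ and uses $\sum_{j=1}^{h}Q_j$ channels.

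The proof is essentially bookkeeping once CS is taken as a black box, so there is no single hard obstacle; the point that needs the most care is step (iii), namely verifying that extending each block's pattern from period $C_j$ to the common period $C$ preserves every source's inter-update gaps (and hence its interval/AoI guarantee), which is where the divisibility $C_j\mid C$ is used. A secondary subtlety worth stating explicitly is that $\sum_{j=1}^{h}Q_j$ is only an achievable channel count, not necessarily the minimum, so the theorem asserts existence rather than optimality.
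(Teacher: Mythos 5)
Your proposal is correct and matches the paper's intended argument: the paper gives no explicit proof of Theorem~\ref{thm2}, but the surrounding text (Example~\ref{example 2} and the remark that the scheduler there is ``a combination of two schedulers derived by CS'') makes clear the proof is exactly your construction — run CS on each consecutively divisible sub-vector over a disjoint block of $Q_j$ channels and combine the periodic patterns over a cycle of length $\lcm(C_1,\dots,C_h)$. Your extra care in steps (i)--(iii) and the closing remark that the bound is achievability rather than optimality are both consistent with, and slightly more explicit than, what the paper asserts.
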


Based on Theorem \ref{thm2}, we can also divide the sources into different groups and design a scheduler for each group independently given generalized AoI constraints. We still need to find a grouping method for sources such that the corresponding average transmission intervals lead to minimum number of channels. The problem can be formulated as follow:
\begin{equation}\label{grouping problem}
	\begin{split}
		\min_{\mathcal{G},\bm{l}} & \sum_{j\in\mathcal{G}}\left\lceil{ \sum_{i\in \bm{g}_j}\frac{1}{l_i}}\right\rceil,
		\\
		s.t.&  \left\{\begin{array}{lc}
			1\leq l_n\leq d_n, \forall n=1,2,\dots,N\\
			\bm{l}_j \text{ is consecutively divisible}, \forall j=1,\dots,|\mathcal{G}|,\\
		\end{array}\right.
	\end{split}
\end{equation}
where $\mathcal{G}$ represents a grouping method and $|\mathcal{G}|$ is the number of groups. $\bm{g}_j$ includes the sources in the $j$-th group. The average transmission interval of source $n$ is $l_n$ and $\bm{l}_j$ denotes the consecutively divisible average transmission intervals of the $j$-th group.

However, searching for the optimal grouping method is computationally prohibitive. At least $2^{N-1}-1$ grouping methods already exist for the case of just dividing $N$ sources into two groups. 
We have three naive tricks to narrow down the search space: 1) Problem \eqref{Aion} is a special case of problem \eqref{grouping problem} when we only have one group, 
then if $K_{1}$ equals the lower bound in Lemma \ref{lbub}, there is no need to find other grouping schemes. 
2) When $K_{1}$ is larger than the lower bound, to achieve better performance, the optional number of groups should be larger than 1. Additionally, by Theorem \ref{thm2}, each group is assigned at least one channel. Therefore the optional number of groups is upper bounded by $K_{1}-1$. 3) The grouping scheme $\mathcal{G}$ with the lower bound $\sum_{j\in\mathcal{G}}\lceil{ \sum_{i\in \bm{g}_j}\frac{1}{d_i}}\rceil> K_{1}$ can be excluded.

Solving problem \eqref{grouping problem} is challenging with the tricks above because there are still many optional grouping methods left. Therefore, we propose a heuristic grouping algorithm (HGA).

Note that the grouping problem in \eqref{grouping problem} is similar to a clustering problem in the field of machine learning. Some definitions inspired by the clustering are presented. 

\begin{definition} \label{center}
	The center of a group is the source whose average transmission interval obtained by solving problem \eqref{Aion} for the sources in this group, equals its AoI constraint. 
\end{definition}

Based on property 2 in \cite{liu2021aion}, if $l^{\star}=\{l_1^{\star},\cdots,l_N^{\star}\}$ is the solution to \eqref{Aion} with minimum $\sum_{n=1}^{N}\frac{1}{l_n^{\star}}$, then there exists at least one element $i$ in $\bm{l}^{\star}$ such that $l_i^{\star}=d_i$. Since the average transmission interval obtained by solving problem \eqref{Aion} is consecutively divisible, this property can be used to narrow down the search space of the grouping method. To achieve better performance, we should make the gap between the AoI constraints and the average transmission intervals obtained by solving problem \eqref{Aion} in each group as small as possible. Given source $\mu_g$ as the center of group $g$, we define the distance between group $g$ and any source $n$ as follows:

\begin{algorithm}[h]
	\caption{Heuristic Grouping Algorithm (HGA)}
	\begin{algorithmic}[1]
		\REQUIRE  
		The set of sources $\bm{N}$ and the AoI constraints $\bm{d}$;
		\ENSURE 
		A cyclic scheduler $\bm{\pi}_G$;
		\STATE Obtain $\bm{u}=\{u_1,\cdots,u_v\}$ including all the distinct values in $\bm{d}$. For sources in $\bm{N}$, solve problem \eqref{Aion} for average transmission intervals $\{l_1,\cdots,l_N\}$ and $K_{1}=\lceil\sum_{i\in\bm{N}}\frac{1}{l_i}\rceil$.
		\FOR{$i\in[2,K_1-1]$}
		\STATE The optional set of centers: $\bm{u}_{ce}=\{\bm{z}|\bm{z}\subset\bm{u},|\bm{z}|=i\}$
		\FOR {each optional set of centers}
		\STATE Put sources in the group with the minimum distance. $\bm{G}$ denotes the groups with unused part larger than $\gamma$. 
		\FOR{each group $j$ in $\bm{G}$}
		\STATE $\bm{r}=\{D_{j,n}+\frac{1}{d_n}|n\in\bm{g}_j\}$, where $\bm{g}_j$ includes all the sources in $j$-th group and has $|\bm{g}_j|$ elements. 
		\STATE Arrange $\bm{r}$ in descending order. Find the maximum $m$ such that $\sum_{n=1}^{m}r_n\leq \lfloor\sum_{n=1}^{|\bm{g}_j|}r_n\rfloor$. 
		\STATE Put the sources corresponding to $\{r_{m+1},\cdots,r_{|\bm{g}_j|}\}$ to other group with the minimum distance and enough unused part. If we can not find such groups, put the sources into the first group.
		\ENDFOR
		\STATE Derive the minimum number of channels for each group by solving problem \eqref{Aion}. $K_G$ denotes the number of channels required by all the groups. 
		\IF {$K_G=\lceil\sum_{n=1}^{N}\frac{1}{d_n}\rceil$}
		\STATE Construct $\bm{\pi}_G$ for the current grouping scheme by CS. Terminate the algorithm.
		\ENDIF
		\ENDFOR
		\ENDFOR
		\STATE Obtain the optimal grouping method over all the optional number of groups and choices of centers. Construct $\bm{\pi}_G$ by CS.		
	\end{algorithmic}
\end{algorithm}

\begin{definition} \label{distance}
	The distance between group $g$ and source $n$ is:
	\begin{equation}
		\begin{split}
			D_{g,n}=&  \left\{\begin{array}{lc}
				\frac{1}{\lfloor\frac{d_n}{d_{\mu_g}}\rfloor d_{\mu_g}}-\frac{1}{d_n}, &d_n\geq d_{\mu_g};\\
				\frac{\lceil \frac{d_{\mu_g}}{d_n} \rceil}{d_{\mu_g}}-\frac{1}{d_n}, &d_n<d_{\mu_g}.\\	
			\end{array}\right.
		\end{split}
	\end{equation}	  
\end{definition}

If $n$ is put into group $g$, then there are two different cases: 1) when $d_n\geq d_{\mu_g}$, $d_{\mu_g}$ should be a divisor of $l_n$, then the largest $l_n$ is $\lfloor\frac{d_n}{d_{\mu_g}}\rfloor d_{\mu_g}$; 2) when $d_n< d_{\mu_g}$, $d_{\mu_g}$ should be divisible by $l_n$, hence $l_n\leq d_{\mu_g}\frac{1}{\lceil \frac{d_{\mu_g}}{d_n} \rceil}$. Therefore, the distance between group $g$ and source $n$ is the difference between the reciprocal of $d_n$ and the largest average transmission interval of source $n$ when $n$ is put into group $g$. In addition, there may be more than one source with the same average transmission interval as AoI constraint in a group. For simplicity, the distance is calculated based on only one center. The simulation results in the next section validate this simplification. 

To approach the minimum number of channels, we should put the sources into the group with the minimum distance. However, such a procedure may lead to inefficient use of channels. In Example \ref{example 2}, if the centers of group 1 and 2 are $A$ and $B$. Then $\{E,F,G,H,I,J\}$ are put into group 1 while $C,D$ are packed into group 2. It is frustrating that such an operation requires 3 channels. If we rearrange source $I,J$ and put them in group 2, then we only need 2 channels. 

To illustrate this phenomenon, we define the utilization of an average transmission interval vector $\bm{l}=[l_1,\cdots,l_N]$ as $\sum_{n=1}^{N}\frac{1}{l_n}$. The number of channels assigned to $\bm{l}$ is  $\lceil\sum_{n=1}^{N}\frac{1}{l_n}\rceil$, when $\bm{l}$ is consecutively divisible. Then we use $\lceil\sum_{n=1}^{N}\frac{1}{l_n}\rceil-\sum_{n=1}^{N}\frac{1}{l_n}$ to approximate the unused part of the channels. A large unused part indicates the inefficient utilization of channels. When $n$ is packed into group $g$, $l_n$ can be approximated by reciprocal of $D_{g,n}+\frac{1}{d_n}$. Therefore, if group 1 contains $\{A,E,F,G,H,I,J\}$, then the approximated average transmission intervals are $[3,6,6,6,6,6,6]$ and the unused part of the channel is $2-\frac{1}{3}-\frac{6}{6}=\frac{2}{3}$. For group 2 with $\{B,C,D\}$, the unused part of the channel is $1-\frac{3}{5}=\frac{2}{5}$. Both two groups inefficiently use the channels. If group 1 and 2 contain $\{A,E,F,G,H\}$ and $\{B,C,D,I,J\}$, the approximated average transmission interval vectors are $[3,6,6,6,6]$ and $[5,5,5,5,5]$ while the unused part of channels are zero for both groups. Therefore, after putting the sources to the group with the minimum distance, we should rearrange the sources which lead to inefficient use of the channels. 

\begin{algorithm}[h]\label{TGA}
	\caption{Two-step Grouping Algorithm (TGA)}
	\begin{algorithmic}[1]
		\REQUIRE  
		The set of sources $\bm{N}$ and the AoI constraints $\bm{d}$;
		\ENSURE 
		A cyclic scheduler $\bm{\pi}_T$;
		\STATE Search harmonic sources by HSI, construct $\bm{\pi}_H$
		\STATE For the other sources in $\bm{N}$, design $\bm{\pi}_G$ by HGA 
		\STATE $\bm{\pi}_T$ is the combination of $\bm{\pi}_H$ and $\bm{\pi}_G$. The number of channels required by $\bm{\pi}_T$ is the sum of numbers of channels required by $\bm{\pi}_H$ and $\bm{\pi}_G$.
	\end{algorithmic}
\end{algorithm}

Finally, we propose HGA in Algorithm 4. We first derive $K_1$ by solving problem \eqref{Aion}, 
hence the optional number of groups is $\{2,\cdots,K_{1}-1\}$. 
Secondly, we assume the centers of each group have distinct AoI constraints. If there are two centers $\mu_g$ and $\mu_{g^{\prime}}$ with the same AoI constraint, then for any source $n$, the two groups are the same because distance $D_{g,n}=D_{g^{\prime},n}$. Therefore, if we divide the sources into $i$ groups, there are $\tbinom{v}{i}$ ways to choose centers, where $v$ is the number of distinct values in $\bm{d}$. For each optional combination of centers, we put the sources into the group with the minimum distance. After that, for each group with large unused part of channels (unused part larger than $\gamma$), we find the sources leading to the inefficient utilization and put them into the other groups with the minimum distance and enough unused parts of channels. Here, `enough' means that if source $n$ is rearranged and put into group $g^{\prime}$, then $D_{g^{\prime},n}+\frac{1}{d_n}$ is no larger than the unused part of $g^{\prime}$. 
When the grouping method for a certain combination of centers is derived, we obtain the number of channels by solving \eqref{Aion} for each group. If the number of channels required by the current grouping scheme equals the lower bound of the problem, we stop searching for a better solution and terminate the algorithm. Otherwise, we will construct a grouping scheme for each optional number of groups and combination of centers to derive the optimal grouping method. Finally, we construct scheduler $\bm{\pi}_G$ by using CS for each group.

\subsection{Complexity}

In HSI, we first search the sources with harmonic AoI constraints with a complexity of $O(N^3)$. Then we construct a scheduler for them with a complexity of $O(C_HN)$, $C_H$ is the cycle length of $\bm{\pi}_H$ constructed by HSI. In HGA, we use a fast algorithm proposed in \cite{liu2021aion} to solve problem \eqref{Aion} and the complexity is $O(N^4(d_N)^2)$. Since we have at most $2^v$ possible combinations of centers, the complexity of HGA is $O(2^vN^4(d_N)^2)$, where $v$ is the number of distinct values in $\bm{d}$. In conclusion, TGA has a complexity of $O(2^vN^4(d_N)^2)$.

\subsection{Combination of HSI and HGA}

As mentioned earlier, TGA is a combination of HSI and HGA, shown in Algorithm 5. Although HGA can design a scheduler for any AoI constraints, HSI plays a key role in TGA. Firstly, the number of sources needed to be handled by HGA is decreased after HSI. Secondly, 
since the optional number of groups is upper bounded by $K_1-1$, HGA can reduce $K_1$ if the sources founded by HSI, denoted by $\bm{N}^{\prime}$, is not an emptyset. Therefore HSI can efficiently narrow down the search space of HGA. Thirdly, since HSI has strong theoretical guarantee, taking HSI as the first step of TGA can achieve better performance compared with pure HGA. Denote the other sources with generalized AoI constraints by $\bar{\bm{N}^{\prime}}=\bm{N}\setminus\bm{N}^{\prime}$. Define $\bm{l}^{\star}=[l_1^{\star},\cdots,l_N^{\star}]$ and $\bm{l}^{\prime}=[l_1^{\prime},\cdots,l_n^{\prime},\cdots]$ as the optimal average transmission intervals of problem \eqref{Aion} for $N$ sources and sources in $\bar{\bm{N}^{\prime}}$, respectively. Let $K_{WA}$ and $K_{OA}$ denote the number of channels achieved by TGA with and without HSI when we only allow the sources in $\bar{\bm{N}^{\prime}}$ to be packed into the same group, then  
\begin{equation}
	\begin{split}
		K_{OA}&=\left\lceil\sum_{n\in\bm{N}}\frac{1}{l_n^{\star}}\right\rceil=\left\lceil\sum_{n\in\bm{N}^{\prime}}\frac{1}{l_n^{\star}}+\sum_{n\in\bar{\bm{N}^{\prime}}}\frac{1}{l_n^{\star}}\right\rceil\\
		&\overset{(a)}{\geq}\left\lceil\sum_{n\in\bm{N}^{\prime}}\frac{1}{d_n}+\sum_{n\in\bar{\bm{N}^{\prime}}}\frac{1}{l_n^{\star}}\right\rceil\overset{(b)}{=}\left\lceil\sum_{n\in\bar{\bm{N}^{\prime}}}\frac{1}{l_n^{\star}}\right\rceil+\sum_{n\in\bm{N}^{\prime}}\frac{1}{d_n}\\
		&\overset{(c)}{\geq} \left\lceil\sum_{n\in\bar{\bm{N}^{\prime}}}\frac{1}{l_n^{\prime}}\right\rceil+\sum_{n\in\bm{N}^{\prime}}\frac{1}{d_n}{=}K_{WA}.
	\end{split}
\end{equation} 

Inequality (a) comes from the characteristics of average transmission interval. 
Equality (b) is based on the fact that the scheduler constructed by HSI is optimal and achieves full channel utilization for all the occupied channels, thus $\sum_{n\in\bm{N}^{\prime}}\frac{1}{d_n}$ is an integer. Since $\bm{l}^{\prime}$ is the solution of problem \eqref{Aion} for sources in $\bar{\bm{N}^{\prime}}$, then for any $\bar{\bm{N}^{\prime}}$, we have $\lceil\sum_{n\in\bar{\bm{N}^{\prime}}}\frac{1}{l_n^{\prime}}\rceil\leq\lceil\sum_{n\in\bar{\bm{N}^{\prime}}}\frac{1}{l_n^{\star}}\rceil$, then inequality (c) holds. 
If $\lceil\sum_{n\in\bar{\bm{N}^{\prime}}}\frac{1}{l_n^{\prime}}\rceil<\lceil\sum_{n\in\bar{\bm{N}^{\prime}}}\frac{1}{l_n^{\star}}\rceil$, the introduction of HSI leads to better results when all the sources are packed into a single group. Since $K_1$ is the upper bound for the results of HGA, better performance can also be achieved by HSI when the sources in $\bar{\bm{N}^{\prime}}$ are divided into several groups.

Since HGA is a heuristic method for solving problem \eqref{grouping problem}, we can not guarantee that the more sources scheduled by HSI, the better the performance achieved by TGA. Based on the simulations, we find a trick to improve the performance of the proposed method. If the distinct values of AoI constraints of sources in $\bar{\bm{N}^{\prime}}$ after the first part of HSI are the same as that after the second part, we then only schedule the sources found in the first part by HSI. For example, if AoI constraints for the sources in $\bar{\bm{N}^{\prime}}$ after the first part is $\bar{\bm{d}}^{\prime}=[6,6,6,6,6,7,7,9,9,9,9,9,9,9]$. Then after the second part, AoI constraints are $[6,7,7,9,9,9,9]$. 
Such a procedure needs $3$ channels for $\bar{\bm{d}}^{\prime}$. If we use HGA directly for $\bar{\bm{d}}^{\prime}$, we only require 2 channels by dividing them into two groups.

\subsection{Remarks}

Aion, proposed in \cite{liu2021aion}, is a fast algorithm for solving problem \eqref{Aion} and constructing a corresponding scheduler. Compared with Aion, our TGA has the following main differences:
\begin{itemize}
\item Aion is optimal with only consecutively divisible AoI constraints while HSI (the first step of TGA) is optimal for sources with harmonic AoI constraints.
\item Given AoI constraints $\bm{d}$, Aion directly solves problem \eqref{Aion} for $\bm{d}$ assuming all the sources are packed into a same group, while HGA aims at solving our grouping problem \eqref{grouping problem}. HGA divides the sources into different groups, each group can possess different cyclic schedulers and cycle lengths. Since problem \eqref{Aion} is only a special case of \eqref{grouping problem}, the number of channels achieved by Aion will be the upper bound of HGA.
\end{itemize}

Therefore, if HSI is excluded from TGA and $N$ sources are only allowed to be packed into a same group, TGA will be reduced to Aion.

\section{Simulation Results}

In this section, we compare the performance of Aion and TGA. Besides, the number of channels required by consecutively divisible $\bm{l}=[l_1,\cdots,l_N]$ is derived by  $\left\lceil\sum_{n=1}^{N}\frac{1}{l_n}\right\rceil$, not by using CS. Fig.~\ref{aoi10} shows the number of channels achieved by TGA ($\gamma=0.5$), GD, and Aion. We also show the lower bound in Lemma \ref{lbub}. In Fig.~\ref{aoi10}, we vary the number of sources, namely $N$, from 10 to 300 with a step size of $10$. For each $N$, the AoI constraints are generated by following a uniform distribution which takes value in interval $[2,10]$. The number of required channels is averaged over 1000 instances.

The performance achieved by TGA is close to the lower bound. When $N$ increases, the gap between Aion and the lower bound increases from $0.343$ to $14.771$ while the gap for TGA is no more than $0.124$. This phenomenon illustrates that TGA can still ensure good performance when $N$ is large. When $N=300$, the average lower bound is $64.867$ while the average numbers of channels required by Aion and TGA are $79.638$ and $64.961$, respectively. Therefore, with HSI and HGA, the average number of channels decreases by $18.43\%$. This is because Aion can only provide optimal schedulers for consecutively divisible AoI constraints. For the generalized AoI constraints, Aion has to find the optimal consecutively divisible average transmission intervals. Such an operation results in a high update frequency for most sources and requires more channels. Our TGA can not only provide optimal schedulers for harmonic AoI constraints by HSI but also decrease the gap between the average transmission intervals and the AoI constraints of the sources by HGA.

In addition, we find that the number of channels achieved by Aion exceeds GD when the number of sources is larger than $70$. Recall that GD assigns independent channels to the sources with distinct AoI constraints, hence GD allows generalized average transmission interval vectors and achieves the optimal performance for the sources which occupy exactly an integer number of channels. Therefore, when $N$ increases, GD can find more sources that can occupy an integer number of channels and achieve better performance than Aion. 

\begin{figure}[ht]
	
	\centering
	\includegraphics[scale=0.2385]{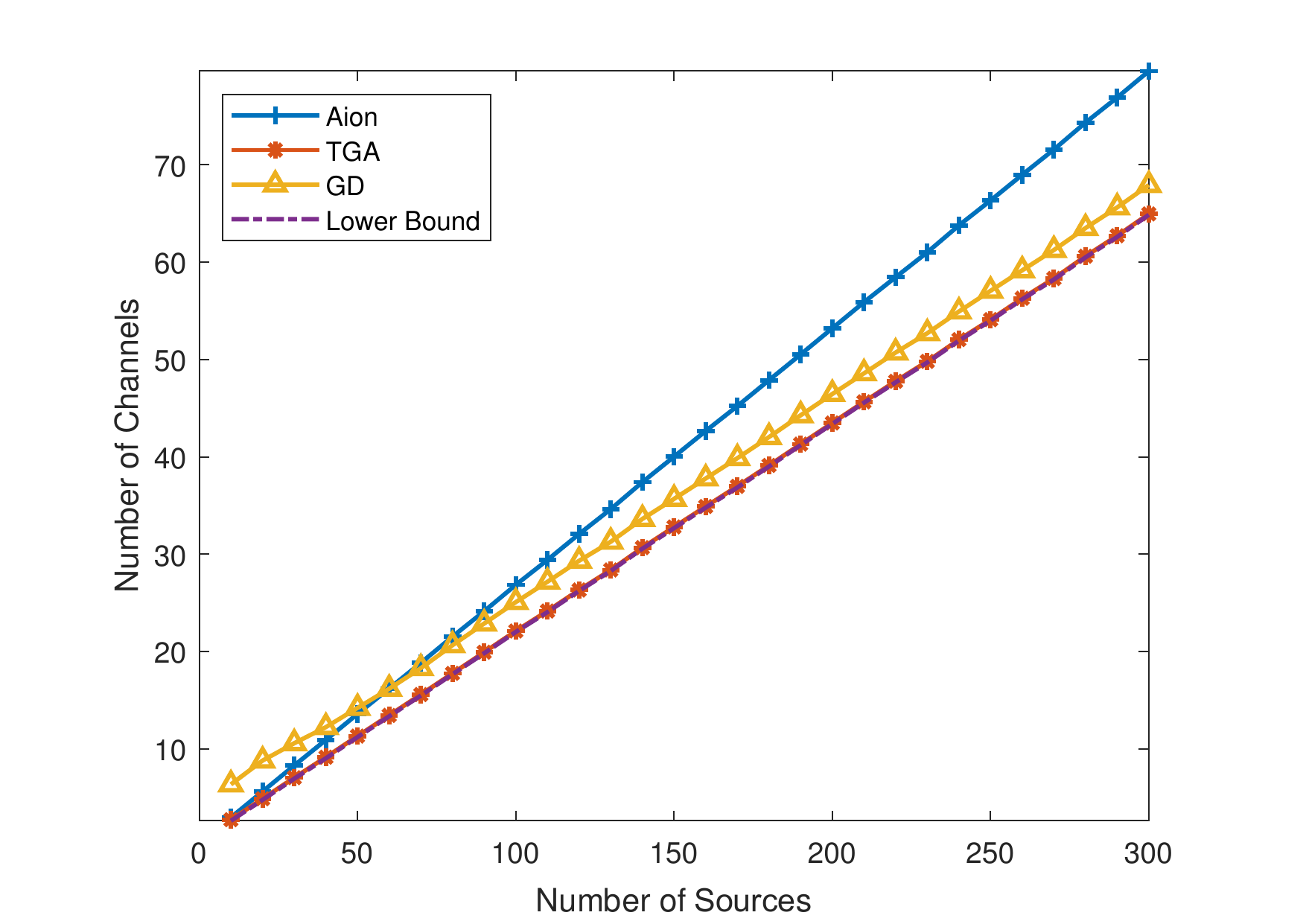}
	\vspace{-0.1in}
	\caption{Performance of Aion, GD and TGA over 1000 instances, where AoI constraints are generated by uniform distribution $[2,10]$.}
	\label{aoi10}
\end{figure}
\begin{figure}[ht]
	
	\centering
	\includegraphics[scale=0.2385]{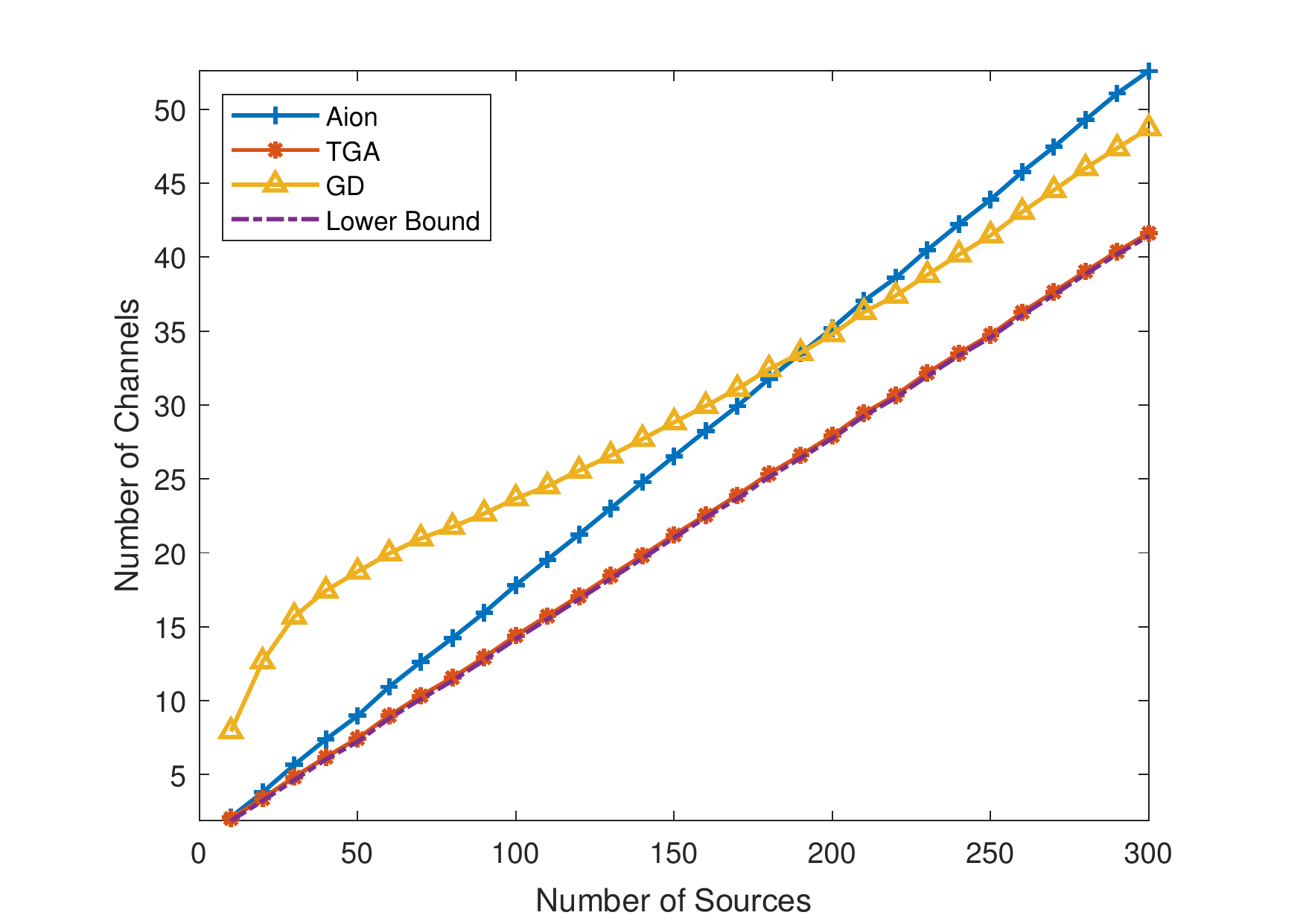}
	\vspace{-0.1in}
	\caption{Performance of Aion, GD and TGA over 1000 instances, where AoI constraints are generated by uniform distribution $[2,20]$.}
	\label{aoi20}
\end{figure}

Fig.~\ref{aoi20} shows the comparison among the three methods mentioned above when the number of distinct values of AoI constraints increases. In Fig.~\ref{aoi20}, 1000 instances of AoI constraint vectors are generated by following a uniform distribution $[2,20]$. When $N$ increases, the gap between Aion and the lower bound increases from $0.241$ to $11.146$ while the gap for TGA is no more than $0.225$. When $N=300$, the average lower bound is $41.439$ while the average numbers of channels required by Aion and TGA are $52.585$ and $41.615$. The gap between TGA and the lower bound is no more than $0.42\%$. Besides, with HSI and HGA, the average number of channels decreases $20.86\%$ compared with Aion.  
Comparing with the results in Fig.~\ref{aoi10}, the gap between Aion and TGA increases when the AoI constraints are more distinct. This is because when the number of distinct values increases, Aion may lead to more sources with average transmission intervals smaller than the AoI constraints and thus require more channels. The introduction of grouping can solve this problem by increasing the solution space of the average transmission interval vectors.

Secondly, the number of sources with the performance of GD lower than Aion increases from $70$ to $210$ when the maximum AoI constraint increases from $10$ to $20$. Given $N$, then the number of sources with the same AoI constraint decreases when the number of distinct values of AoI constraints increases. In this case, GD will waste more channel resources and achieve worse performance. Therefore, when the number of distinct values increases, the performance of GD is better than that of Aion only when $N$ is large. 

\section{Conclusion}

A scheduling problem for minimizing the number of channels with AoI guarantee is studied in this paper. To split the large-scale scheduling problem into multiple small-scale problems, we propose a grouping method which divides the sources into different groups and designs schedulers for each group independently. Therefore, the scheduling problem is then transformed into finding the optimal grouping scheme. A novel two-step grouping algorithm (TGA) is proposed. In the first step, we identify the sources with harmonic AoI constraints and design an optimal scheduler for them. Then for the other sources with generalized AoI constraints, we propose a heuristic grouping algorithm to pack the sources which can be scheduled together with minimum update rates into a same group. In the simulations, TGA shows only $0.42\%$ gap compared with the lower bound of the problem when the number of sources is large.  

\section{ACKNOWLEDGMENT}

This work is sponsored in part by the National Key R\&D Program of China No. 2020YFB1806605, by the Nature Science Foundation of China (No. 62022049, No. 61871254), and Hitachi Ltd.

\bibliography{test}
\bibliographystyle{IEEEtran}

\end{document}